\documentclass[11pt]{article}

\usepackage[letterpaper,margin=1in]{geometry}
\usepackage{setspace}
\usepackage{amsthm,amsmath,amssymb}
\usepackage{tcolorbox}
\usepackage{graphicx}
\usepackage{bm}
\usepackage[ruled,vlined,linesnumbered]{algorithm2e}
\usepackage{amsfonts}
\usepackage{mathtools}
\tcbuselibrary{skins}
\usepackage{xcolor}
\usepackage{cancel}
\usepackage{enumitem}
\usepackage{csquotes}
\usepackage[]{mdframed}
\usepackage{caption}
\usepackage[dvipsnames]{xcolor}
\usepackage{float}
\usepackage[bb=boondox]{mathalfa}
\usepackage{tikz}
\usetikzlibrary{patterns}
\usepackage[colorinlistoftodos]{todonotes}
\usepackage{authblk}
\colorlet{mix}{red!50!black}
\definecolor{DarkRed}{rgb}{0.8,0,0}
\definecolor{ForestGreen}{rgb}{0.1333,0.5451,0.1333}
\usepackage{hyperref}
\hypersetup{
  colorlinks=true,
  linkcolor=black,   
  citecolor=ForestGreen,
  urlcolor=DarkRed
}
\usepackage{cleveref} 
\usepackage{subcaption}
\theoremstyle{plain}
\newtheorem{theorem}{Theorem}

\newtheorem{lemma}{Lemma}

\theoremstyle{definition}
\newtheorem{definition}{Definition}

\newtheorem{observation}{Observation}

\theoremstyle{remark}

\crefname{theorem}{Theorem}{theorems}
\Crefname{theorem}{Theorem}{Theorems}
\crefname{result}{Result}{results}
\Crefname{result}{Result}{Results}
\crefname{lemma}{Lemma}{lemmas}
\Crefname{lemma}{Lemma}{Lemmas}
\crefname{sublemma}{Sublemma}{lemmas}
\Crefname{sublemma}{Sublemma}{Lemmas}
\crefname{corollary}{corollary}{corollaries}
\Crefname{corollary}{Corollary}{Corollaries}
\crefname{proposition}{proposition}{propositions}
\Crefname{proposition}{Proposition}{Propositions}
\crefname{claim}{claim}{claims}
\Crefname{claim}{Claim}{Claims}
\crefname{observation}{observation}{observations}
\Crefname{observation}{Observation}{Observations}
\crefname{definition}{definition}{definitions}
\Crefname{definition}{Definition}{Definitions}
\crefname{fact}{fact}{facts}
\Crefname{fact}{Fact}{Facts}
\crefname{example}{example}{examples}
\Crefname{example}{Example}{Examples}
\crefname{remark}{remark}{remarks}
\Crefname{remark}{Remark}{Remarks}

\newboolean{shortver}
\setboolean{shortver}{false}

\newcommand{\frd}{Fr\'echet distance~}
\newcommand{\frdl}{Fr\'echet distance}

\newcommand{\klcp}{$(k, \ell)-$center problem~}

\newcommand{\lab}{\ensuremath{\ell(a,b)}~}
\newcommand{\lax}{\ensuremath{L_{x_1,x_2,\ldots x_{2(d-1)}}~}}
\newcommand{\U}[1]{\ensuremath{\mathcal{U}({#1})}~}

\newcommand{\df}[2]{\ensuremath{\delta_F({#1},{#2})}}
\newcommand{\mtc}[1]{\ensuremath{\mathcal{#1}}}
\newcommand{\dhl}[1]{\delta_{\vec{h} \mtc{L}}({#1})}
\newcommand{\dhr}[1]{\delta_{\vec{h} \mtc{R}}({#1})}
\newcommand{\dhm}[1]{\delta_{\vec{h} \mtc{M}}({#1})}

\definecolor{bblue}{rgb}{0,0.1,0.55}

\definecolor{defblue}{rgb}{0.1,0.4,0.6}
\let\emph\relax\DeclareTextFontCommand{\emph}{\color{bblue}\em}

\title{Polynomial-time algorithm for exact $(1,2)$-center\\ problem under continuous \frd}

\author{
Soumya Bhattacharya\thanks{Indian Statistical Institute, Kolkata, India. Email: soumyakolkata22@gmail.com}
\hspace{1cm}
Serene Rasheed\thanks{Indian Institute of Technology Delhi, New Delhi, India. Email: serenerasheed1@gmail.com}
\hspace{1cm}
Sasanka Roy\thanks{Indian Statistical Institute, Kolkata, India. Email: sasanka.ro@gmail.com}
}

\date{}
\begin{document}
\maketitle
\begin{abstract}
    In this paper, we explore the $(1,2)$-center problem for polygonal curves under continuous \frdl. The $(k,\ell)$-center problem, in general, is known to be NP-hard. Aronov, Filtser, Horton, Katz, and Sheikhan (WADS'19) gave a polynomial-time algorithm for the $(1,2)$-center of curves in the plane under the discrete \frd. To the best of our knowledge, the $(1,2)$-center under continuous \frd has not been studied yet. We present a polynomial time algorithm to solve the problem exactly under both $\mathbb{L}_2$ and $\mathbb{L}_\infty$ norm, running in $O\bigr((n^2r+nr^2)^{2+\epsilon}\bigl)$ time for curves in the plane where $r$ is the number of input curves and $n$ is the maximum complexity of any curve. Further, for curves in any dimension $d$, the expected time to compute the center using the algorithm is $O\bigr((n^2r+nr^2)^{2(d-1)+\epsilon}\bigl)$. We have also shown that an $(1+\widetilde{\epsilon})-$factor approximation of $(1,2)$-center can be computed in $O(n^2r+nr^2+1/\epsilon^s)$ time for any $\epsilon>\widetilde{\epsilon}>0$ and some constant $s$ for curves in the plane. For curves in the plane, we have shown that, with the center restricted to be horizontal, we can compute the exact center in $O(n^2r+nr^2)$ time. An algorithm has been introduced to find a $3$-factor approximation of the $(1,2)$-center in time linear in the number of curves. A formulation was introduced by de Berg, Mehrabi, and Ophelders (CCCG'17) to measure \frd between a curve and a query segment under $\mathbb{L}_2$ norm for curves in the plane. We have shown the formulation is valid under both $\mathbb{L}_2$ and $\mathbb{L}_\infty$ norm for curves in $\mathbb{R}^d$.
\end{abstract}

\section{Introduction}
Measuring similarity among polygonal curves plays an important role in many fields, including the analysis of the movement of GPS-enabled objects, biomedical science, river networks, and sports analytics. Similarity measures between geometric objects, such as polygonal curves, require metrics that respect their inherent structure; in this regard, the Fréchet distance has emerged as particularly well-suited. Unlike simpler metrics, the Fréchet distance accounts for the ordering and continuity of points along a curve, making it especially effective for capturing geometric resemblance.

Researchers have used the Fréchet distance to study the structure of biomolecules such as proteins \cite{app_biomedical_jiang2008protein}, analyze animal movement \cite{app_anm_mv_Cleasby2019-al}, perform handwriting matching \cite{app_hand_writing_sriraghavendra2007frechet}, and analyze sports trajectories \cite{app_sports_tang2017efficient}, among many other real-world applications involving similarity between polygonal curves.
\paragraph*{Background:} A polygonal curve of complexity $n$, is defined as the linear interpolation of a sequence of $n$ points. Authors in \cite{AltG95} introduced an algorithm for computing the Fréchet distance between two polygonal curves in $O(mn\log mn)$ time, where $m,n$ are the complexities of the curves, and formally established the appropriateness of \frd for measuring curve similarity. Authors in \cite{ChengHuang2025} recently improved that result. Simplification of a curve and clustering set of curves under \frd has been studied extensively \cite{simplification_Agarwal,ChengHuang2023, ChengHuangJiang2025, vanKreveldLofflerWiratma2018, BringmannChaudhury2019, vanderHoogEtAl2025, AronovFarhanaKatzRamesh2025}. Authors in \cite{data_structure_frd, buchin_eff_frech} worked on the problem to prepare a data structure for a given curve so that for a specific query segment, one can answer the \frd between the segment and the curve quickly. There have been many studies regarding preparing a data structure for a given set of curves, so that for any query curve, one can find the nearest curve from the query curve in a shorter time \cite{ChengHuang2024, discrete_12}. Another problem of interest in this regard is $(k,\ell)$-median/center clustering. Given a set of input curves, $(k,\ell)$-median problem asks to find $k$ median curves of complexity at most $\ell$ that minimize the sum of \frd over all input curves to their closest median curve. Authors in \cite{BuchinDriemelRohde2023} worked on this problem. On the other hand, the $(k,\ell)$-center problem asks to find $k$ center curves of complexity at most $\ell$ that minimize the \frd over all input curves to their closest center curve. To define formally:

\textbf{$\boldsymbol{(k,\ell)}$-center problem:} Given a set of polygonal curves $\pi = \{\pi_1, \pi_2, \ldots, \pi_r\}$ in $\mathbb{R}^d$, the $(k,\ell)$-center problem asks for a set of $k$ polygonal curves (centers)- $c = \{c_1, c_2, \ldots, c_k\}$ in $\mathbb{R}^d$, not necessarily chosen from $\pi$, each of complexity at most $\ell$, minimizing $\max_{\pi_i \in \pi} \; \min_{c_j \in c} \; \delta_F(\pi_i, c_j)$. This minimum value is called the distance of the centers to the set of input curves, and a set of centers $c$ that attains it is called an optimal center set.

The \klcp under discrete \frd for time series (univariate polygonal curves) was studied in \cite{univariate_clustering}. There, the problem was proved to be NP-hard, and a $(1+\epsilon)$-factor approximation algorithm with running time $O(rn\log r)$ was obtained, where $r,n$ are the number of curves and the maximum complexity of any curve in the input set, for any $\epsilon>0$ and constant $k, \ell$. The $(k,\ell)$-center problem for general polygonal curves under both discrete and continuous \frd was studied in \cite{kl_center}, where it was shown that, under the continuous case, the problem is NP-hard to approximate within any factor smaller than $2.25$, and a $3$-factor approximation algorithm was given, running in $O\bigl(kn(r\ell \log(\ell+n)+n^2\log n)\bigr)$. The particular case $k=1,\ell=2$ was studied in \cite{discrete_12} for curves in the plane, and a polynomial-time algorithm for the $(1,2)$-center problem under discrete \frd was obtained, but the continuous case remained untouched.

\paragraph*{Our Results:} We observed that the particular case $k=1,\ell=2$, can be solved in polynomial time even under continuous \frd for curves in the plane. For any $\epsilon>0$, we obtained an $O\bigl((n^2r+nr^2)^{2+\epsilon}\bigr)$ time algorithm for curves in $\mathbb{R}^2$  to find $(1,2)$ center where $r$ is the number of curves and $n$ is the maximum complexity of any curve in the input set. For curves in any dimension $d$, the expected time to compute the center is $O\bigr((n^2r+nr^2)^{2(d-1)+\epsilon}\bigl)$.  We have also shown that, for curves in the plane with the center restricted to be horizontal, we can solve the problem in $O(n^2r+nr^2)$ time.

Along with results for exact $(1,2)$-center we obtained some results for some approximation schemes; we have shown that one can compute an $(1+\widetilde{\epsilon})-$factor approximation of $(1,2)$-center in $O(n^2r+nr^2+1/\epsilon^s)$ time for any $\epsilon>\widetilde{\epsilon}>0$ and some constant $s$. We also obtained a simple algorithm for curves in the plane that runs in $O(r)$ time and computes a $3$-factor approximation of the $(1,2)$-center.

All our results are valid under both $\mathbb{L}_2$ and $\mathbb{L}_\infty$ norm. Further, we extend a formulation of \cite{data_structure_frd}, originally given for measuring \frd between a curve and a query segment under $\mathbb{L}_2$ norm for curves in the plane, to both $\mathbb{L}_2$ and $\mathbb{L}_\infty$ norms and to curves in $\mathbb{R}^d$.

\section{Preliminaries}\label{sec_prelim}
A polygonal curve $P$ is the linear interpolation of a sequence of points. These points are called the \emph{vertices} of $P$, and the number of such vertices is the \emph{complexity} of the curve. We represent a polygonal curve as a function $P : [0, n] \rightarrow \mathbb{R}^d$, where $p_{i + k} = (1-k)\cdot p_i + k \cdot p_{i+1},\: \text{for } i \in \{0, \ldots, n-1\},\ k \in [0,1]$. Let $P$ and $Q$ be two polygonal curves of complexity $n$ and $m$, respectively, i.e., $P : [0, n-1] \rightarrow \mathbb{R}^d$ and $Q : [0, m-1] \rightarrow \mathbb{R}^d$. The \emph{Fr\'echet distance} between $P$ and $Q$ is defined as:

\begin{equation}\label{eq_frd_alt_gadou}
D_F(P, Q) = \inf_{\alpha, \beta} \; \sup_{t \in [0,1]} \; d\bigl(P(\alpha(t)), Q(\beta(t))\bigr),
\end{equation}

where $\alpha : [0,1] \rightarrow [0, n-1]$ and $\beta : [0,1] \rightarrow [0, m-1]$ range over all continuous, non-decreasing, surjective (reparametrization) functions, and $d(\cdot,\cdot)$ denotes the distance under the concerned norm.

We will now introduce some notation and definitions that will be used or referred to throughout the paper. \lax will denote the line in $\mathbb{R}^d$ with parameter value $x_1,x_2,\ldots,x_{2(d-1)}$. Throughout this paper, mention of $\lax$ or simply $L$ indicates a directed line. For a point $p$ and a line $L$, $p^{*}(L)$ denote the point on $L$ nearest to $p$ in the distance norm under consideration; when this nearest point is not unique, i.e. in the case of an axis-parallel line when distance is measured under $\mathbb{L}_\infty$ norm, $p^{*}(L)$ denotes the closed, connected set of nearest points. For two points $p_i,p_j$, if $p_i^*(L)$ appears after $p_j^*(L)$ while traversing the line $L$ in its direction, we write $p_i^*(L)>_{L} p_j^*(L)$ (when $p_i^*(L)$ or $p_j^*$(L) are not unique points, we compare where $p_i^*(L)$ or $p_j^*$(L) starts to appear). When $L$ is clear from the context, we simply write $p_i^*$ or $p_i^*> p_j^*$ instead of $p_i^*(L)$ or $p_i^*>_L p_j^*$. We write $\lab$ to represent a line segment with endpoints $a,b$ and whenever we mention $\lab \in L$, it is assumed that $a\leq_Lb$.

\begin{definition}[Backpair]
 We call a pair of vertices $(p_i, p_j)$, belonging to the same curve, a backpair with respect to a line $L$ if $i<j$ but $p_i^*>_{L} p_j^*$. 
\end{definition}

\begin{definition}
    Let $(p_i, p_j)$ be a backpair with respect to $L$. The optimal point $q$ on $L$ for which $\max(||p_i-q||,||p_j-q||)$ is minimum is called \emph{optimal point on}\footnote{when this optimal point is not unique, we consider the closed, connected set of optimal points} $L$ for the backpair-$(p_i, p_j)$. This minimized maximum distance is called \emph{backpair distance}.
\end{definition}

$$B_{(p_i,p_j)}(L) =  \min_{q \in L} \max \{ ||p_i - q|| ,||p_j - q||  \} $$

Backpair distance between a line segment and a polygonal curve is defined as:
$$ B(P,L) = \max_{\substack{ \forall p_i, p_j \in P \\ i < j\\ p_i^* >_{L} p^*_j }}B_{(p_i,p_j)}(L)$$

For an illustration of the optimal point due to a backpair check \cref{fig_optimal_bpoint}.

\begin{observation}\label{fac_opt_midl}
    For a backpair $p_i,p_j$ with respect to $L$, and its optimal point $q$ on $L$ satisfies the relation: $p_i^*\ge_Lq\ge_L p_j^*$.
\end{observation}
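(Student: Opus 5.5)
The plan is to reduce the statement to a one-dimensional problem on the directed line $L$ by exploiting monotonicity of distances along $L$. Parametrize $L$ by arc length in its direction, writing $q(t)$ for the point at parameter $t$. Let $t_i$ and $t_j$ be the parameters of $p_i^*$ and $p_j^*$; when a nearest set is not a single point, take the parameter where it starts, matching the convention used for $>_L$. Since $(p_i,p_j)$ is a backpair, $t_i > t_j$. Define $f_i(t)=\|p_i-q(t)\|$ and $f_j(t)=\|p_j-q(t)\|$. Both functions are convex in $t$, being a norm composed with an affine map. Each attains its minimum at the nearest point, or on the nearest interval.

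The key step is the standard fact about a convex function on a line. Such a function is non-increasing before its minimizing set and non-decreasing after it. So on $(-\infty,t_j]$, $f_i$ is non-increasing, because $t_j<t_i$ and the minimizers of $f_i$ start at $t_i$. Hence for any $t<t_j$ we have $f_i(t)\ge f_i(t_j)$. Also $f_j(t)\ge f_j(t_j)$, since $t_j$ minimizes $f_j$. Therefore $\max(f_i,f_j)(t)\ge \max(f_i,f_j)(t_j)$.

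The symmetric argument handles $t>t_i$, where $f_j$ is non-decreasing past its minimizing set. Here one must be careful in the $\mathbb{L}_\infty$ axis-parallel case: the minimizing set of $f_j$ could extend beyond $t_i$. In that case use the endpoint of the minimizing interval of $f_j$, or simply note that $f_j$ is non-decreasing on $[t_j,\infty)$ as a whole. Either way, for $t>t_i$ we get $f_j(t)\ge f_j(t_i)$ and $f_i(t)\ge f_i(t_i)$. So every value of $\max(f_i,f_j)$ outside $[t_j,t_i]$ is dominated by a value at an endpoint of this interval. The minimum of the convex function $\max(f_i,f_j)$ is therefore attained inside $[t_j,t_i]$, giving $p_i^*\ge_L q\ge_L p_j^*$.

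When the optimal set is not unique, the statement should be read as saying that some optimal point (equivalently, the start of the optimal set) lies in this range. Alternatively, the closed optimal set meets $[t_j,t_i]$. I would state this explicitly.

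The main obstacle I expect is handling the non-unique nearest sets under $\mathbb{L}_\infty$ for axis-parallel lines. Tie conventions there could break strict monotonicity, and they could make the claim false for the entire optimal set. The plan is to rely only on weak monotonicity of convex functions and to phrase the conclusion for a chosen optimal point.
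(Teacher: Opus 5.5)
Your argument is correct. The paper states this as an observation with no proof; the justification it leaves implicit is the monotonicity of $\|p-\cdot\|$ along $L$ away from $p^*$ (\cref{fac_on_line_convex}). That is exactly the weak monotonicity you use to show that $g=\max(f_i,f_j)$ attains its minimum on $[t_j,t_i]$, so your route is essentially the intended one.

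One point needs tightening. Your parenthetical ``equivalently, the start of the optimal set'' is not a consequence of weak monotonicity. It is, however, the reading that matches the paper's convention of comparing non-unique sets by where they start. The weak inequality $g(t)\ge g(t_j)$ for $t<t_j$ does not rule out optimal points before $t_j$, so you need strict inequality there.

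Strictness is available in two steps:
\begin{itemize}
\item For $t<t_j$, the point $t$ lies strictly before the minimizing set of $f_j$, so $f_j(t)>f_j(t_j)$.
\item Write $t_j=\lambda t+(1-\lambda)t_i$ with $\lambda=\frac{t_i-t_j}{t_i-t}\in(0,1)$. Convexity together with $f_i(t_i)<f_i(t)$ gives $f_i(t_j)<f_i(t)$.
\end{itemize}
Hence $g(t)>g(t_j)$, and combined with your argument the optimal set starts inside $[t_j,t_i]$.

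On the right side only the weak inequality holds, and your suspicion about the whole-set reading is justified. Take $\mathbb{L}_\infty$, $L$ the $x$-axis, $p_i=(1,5)$ and $p_j=(0,5)$. The nearest sets are $[-4,6]$ and $[-5,5]$, so $p_i^*>_L p_j^*$. The optimal set, however, is $[-4,5]$, which is not contained in $[-5,-4]$.
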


\begin{definition}[Hausdorff distance]
    Given two curves, for each point on either curve, consider its distance to the nearest point on the other curve. The maximum of all these distances is called the Hausdorff distance between the curves.
\end{definition}

$$\delta_{\overrightarrow{H}}(P,Q)=\max\{\max_{\substack{p \in P}} \min_{\substack{q \in Q}}||p-q||,\max_{\substack{q \in Q}} \min_{\substack{p \in P}}||p-q||\}$$

Let us define some distance function for a polygonal curve $P$ and a line segment $\lab\in L$ :
$$ \dhl{P, \lab}\coloneq \max_{\substack{p_i \in P\\p_i^* <_L a}} ||a - p_i|| ,$$ 
$$ \dhr{P, \lab}= \max_{\substack{p_i \in P\\p_i^* >_L b}} ||b - p_i|| ,$$ 
$$ \dhm{P, L}= \max_{\substack{ p_i \in P\\a\leq_Lp_i^*\leq_L b}} ||p_i-p_i^*(L)|| ,$$

Lemma $5$ of \cite{data_structure_frd} shows the Hausdorff distance between $P$ and \lab can be formulated as:
$$\delta_{\overrightarrow{H}}(P,\lab)=\max\bigl(\dhl{P, \lab},\dhr{P, \lab},\dhm{P, L}\bigr)$$

\section{Measuring \frd between a polygonal curve and a line segment in any fixed dimension}
In \cite{data_structure_frd} \frd between a polygonal curve and a line segment in $\mathbb{R}^2$, restricting the line segment to be horizontal under $\mathbb{L}_2$ norm has been formulated. In our notation, this formulation is as follows.

\begin{equation}\label{Berg_eq}
    \begin{aligned}
        \df{P}{\lab} = \max\{ ||p_0 - a|| , ||p_n - b||, \dhl{P,\lab},\dhr{P,\lab},\\ \dhm{P,L}, B(P,L) \}
    \end{aligned}
\end{equation}

where $p_0,p_n$ are the start and end vertices of the polygonal curve $P$; $L$ is a horizontal line and $\lab \in L$. 

In \cite{buchin_eff_frech} it was shown that this formulation can be extended to an arbitrary line segment, i.e., without restricting it to be horizontal. We observed that we can extend this result to any dimension under both the $\mathbb{L}_2$ and $\mathbb{L}_\infty$ norms.

\begin{observation}\label{fac_on_line_convex}
Consider a point $p$ and a line $L$ in $\mathbb{R}^d$. Now for all $q_1,q_2 \in L$ if $q_1 \geq_L q_2 \geq_L p^*$ or $q_1 \leq_L q_2 \leq_L p^*$ then $||p-q_1||\geq ||p-q_2||$ under both $\mathbb{L}_2$ and $\mathbb{L}_\infty$ norm.
\end{observation}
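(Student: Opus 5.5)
We will reduce the statement to a one-variable convexity fact. Parametrize the directed line as $L=\{o+tu : t\in\mathbb{R}\}$, where $u$ is a direction vector and the order $\le_L$ is the order of the parameter $t$. For the fixed point $p$, define $f(t)=\|p-(o+tu)\|$. The claim then says that $f$ is non-increasing on $(-\infty,t^*]$ and non-decreasing on $[t^*,\infty)$, where $t^*$ is the parameter of $p^*$. When $p^*$ is a set, $t^*$ should be read as the corresponding endpoint of that set, which is the convention the paper adopts when comparing with $p^*$.

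The first step is convexity of $f$. The map $t\mapsto p-o-tu$ is affine, and every norm is convex, in particular $\mathbb{L}_2$ and $\mathbb{L}_\infty$. So $f$ is a convex function of $t$.

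The second step locates the minimum. By definition of $p^*$, the set $M$ of parameters where $f$ is minimized is exactly the parameter set of $p^*$. Since $f$ is convex, $M$ is a closed interval $[t_1,t_2]$; it is a single point under $\mathbb{L}_2$, and possibly a nondegenerate interval under $\mathbb{L}_\infty$ for axis-parallel lines. A convex function is non-increasing to the left of its minimizing set and non-decreasing to the right of it.

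To verify this, take $q_2 \ge_L p^*$ with parameter $t\ge t_2$, and take $q_1\ge_L q_2$ with parameter $s\ge t$. If $t=s$ there is nothing to prove. Otherwise write $t=\lambda t_2+(1-\lambda)s$ with $\lambda\in[0,1]$. Convexity gives
\[
f(t)\le \lambda f(t_2)+(1-\lambda)f(s)\le f(s),
\]
because $f(t_2)\le f(s)$ by minimality. This gives $\|p-q_1\|\ge\|p-q_2\|$. The case $q_1\le_L q_2\le_L p^*$ is symmetric, using $t_1$ in place of $t_2$.

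The only delicate point is the set-valued $p^*$ under $\mathbb{L}_\infty$. The convention must compare $q_2$ against the correct endpoint of the interval: the right endpoint $t_2$ in the first case and the left endpoint $t_1$ in the second. Even if one uses the "start" of $p^*$ for the first case, the claim still holds: any point inside $[t_1,t_2]$ has the minimal value, and every point to its right is at least that. So the argument goes through with either reading.
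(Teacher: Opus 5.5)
The paper states this as an observation and gives no proof, so there is no argument of its own to compare against. Your proof is correct and is the natural justification: $f(t)=\|p-(o+tu)\|$ is convex because it is a norm composed with an affine map, and your interpolation inequality gives the monotonicity.

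That inequality actually proves slightly more than you state. It shows that $f$ is non-decreasing on $[t_m,\infty)$ and non-increasing on $(-\infty,t_m]$ for \emph{every} minimizer $t_m$. So the endpoint convention for a set-valued $p^*$ does not matter. Whichever point of $p^*$ you compare $q_2$ against is a minimizer lying on the correct side of $q_2$, and the same two-line computation applies.

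One small inaccuracy, inherited from the paper's preliminaries and not used anywhere in your argument: under $\mathbb{L}_\infty$ the nearest-point set can be a nondegenerate interval whenever the direction $u$ has a zero coordinate. For $d\ge 3$ this includes lines that are not axis-parallel. For example, direction $(1,1,0)$ through the origin and $p=(0,0,5)$ give $f(t)=\max(|t|,5)$, which is minimized on all of $[-5,5]$. Your proof only uses that the minimizer set is a closed interval, so nothing in it changes.
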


\begin{observation}
 \label{fact_backpair_equality1}    Consider two backpair points $(p,q)$ for a line $L$ and optimal point for this backpair be $r$. If $r=_{L}p^*$, $|p-r| \geq |q-r|$; If $r=_{L}q^*$, $|q-r| \geq |p-r|$; otherwise $|p-r| = |q-r|$.
\end{observation}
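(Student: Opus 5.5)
The plan is to treat the backpair distance as a one-dimensional minimax problem along $L$. For $t \in L$, set $f_p(t)=\|p-t\|$ and $f_q(t)=\|q-t\|$, and let $g(t)=\max(f_p(t),f_q(t))$. The optimal point $r$ is, by definition, a minimizer of $g$ on $L$. By \cref{fac_opt_midl}, $r$ lies in the closed interval between $q^*$ and $p^*$, because $(p,q)$ is a backpair and so $p^*>_L q^*$. By \cref{fac_on_line_convex}, on this interval $f_p$ is non-increasing as we move toward $p^*$ and $f_q$ is non-increasing as we move toward $q^*$. Moving toward $p^*$ means moving away from $q^*$, so along the interval from $q^*$ to $p^*$ the function $f_p$ is non-increasing and $f_q$ is non-decreasing. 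Both functions are continuous, since norms are continuous.

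We then split into three cases according to where $r$ sits.

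\textbf{Interior case, $q^*<_L r<_L p^*$.} Suppose $f_p(r)>f_q(r)$. By continuity and the monotonicity above, we can move $r$ slightly toward $p^*$. This does not increase $f_p$, and for a small enough move $f_q$ stays below $f_p(r)$. So $g$ does not increase. To get a contradiction we need strict decrease, and this is the main obstacle. Under $\mathbb{L}_2$, $f_p$ is strictly decreasing on the part of the interval before $p^*$, since it is strictly convex along the line with its minimum at $p^*$, so we get a contradiction directly. Under $\mathbb{L}_\infty$, $f_p$ may have a flat part. In that case I would instead use the paper's convention that the optimal point denotes the closed connected set of minimizers. Within that set $g$ is constant, and the set's endpoints are where the two distances cross or where it meets $p^*$ or $q^*$. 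So I would argue about the set: either it touches $p^*$ or $q^*$, which reduces to the boundary cases, or at its points $f_p=f_q$. The symmetric argument handles $f_q(r)>f_p(r)$. Hence $\|p-r\|=\|q-r\|$.

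\textbf{Boundary case, $r=_L p^*$.} If we had $f_q(r)>f_p(r)$, then moving $r$ slightly toward $q^*$ would strictly decrease $f_q$ (strictly under $\mathbb{L}_2$; under $\mathbb{L}_\infty$ handle it by the same set-based argument). For a small move $f_p$ stays below $f_q(r)$, so $g$ would drop, contradicting optimality. Therefore $\|p-r\|\ge\|q-r\|$. The case $r=_L q^*$ is symmetric and gives $\|q-r\|\ge\|p-r\|$.
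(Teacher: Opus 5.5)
Your exchange argument is the natural way to prove this, and for $\mathbb{L}_2$ it is complete. The paper gives no proof of this observation; the closest thing to a justification is the perpendicular-bisector description accompanying Figure~\ref{fig_optimal_bpoint}, which is planar and $\mathbb{L}_2$-specific. There is one small slip in your $\mathbb{L}_2$ step: the distance to $p$ along $L$ is not strictly convex when $p\in L$. It is still strictly monotone on each side of $p^*$, and that is all you use.

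The gap is in your $\mathbb{L}_\infty$ treatment. The missing idea is simple. For any norm, the restriction of $x\mapsto\|p-x\|$ to $L$ (linearly parametrized) is convex. A convex function of one variable is strictly monotone on each side of its set of minimizers, so a flat piece can only be the nearest-point set $p^*$ itself. When $p^*$ and $q^*$ are single points, your interior case puts $r$ outside both of them. Your boundary case $r=p^*$ puts $r$ strictly after $q^*$. So the strict decrease you need holds verbatim, and your $\mathbb{L}_2$ argument already proves the observation under $\mathbb{L}_\infty$ without any set-based detour.

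When nearest points are not unique, your proposed patch is false, and the observation itself is ambiguous there. Take $L$ to be the $x$-axis in $\mathbb{R}^2$ under $\mathbb{L}_\infty$, with $p=(3,1)$ and $q=(0,5)$.
\begin{itemize}
\item Then $p^*=[2,4]\times\{0\}$ and $q^*=[-5,5]\times\{0\}$, so $p^*>_L q^*$ under the paper's start-point convention.
\item The set of optimal points is $[-2,5]\times\{0\}$, with value $5$.
\item This set meets $p^*$, yet at $r=(3,0)\in p^*$ we have $\|p-r\|_\infty=1<5=\|q-r\|_\infty$. So ``touching $p^*$ reduces to the boundary case'' fails.
\item At $r=(0,0)$, which lies strictly between the start points of $q^*$ and $p^*$ and so falls under the ``otherwise'' clause, we get $\|p-r\|_\infty=3\neq 5=\|q-r\|_\infty$.
\end{itemize}
Only particular representatives satisfy the claimed relations; here the left endpoint $(-2,0)$ works, where both distances equal $5$. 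You should either restrict to unique projections, where the convexity argument above finishes the proof for both norms, or first fix which point of the optimal set $r$ denotes and how $=_L$ is read in the degenerate case.
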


\begin{lemma}\label{lm_berg_nec}
    $D_F(P,\lab)\geq \df{P}{\lab}$
\end{lemma}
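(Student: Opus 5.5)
We show that $D_F(P,\lab)$ is at least each of the six quantities in the maximum (\ref{Berg_eq}) defining $\df{P}{\lab}$. Fix any pair of reparametrizations $\alpha,\beta$ realizing a matching of cost $\delta$ (or $\delta+\eta$ for arbitrarily small $\eta>0$, then let $\eta\to 0$). Every vertex $p_i$ is matched to some point $q_i\in\lab$, and monotonicity of the matching gives $q_i\le_L q_j$ whenever $i<j$. Each term is then bounded by exhibiting a vertex, or a pair of vertices, whose matched points force a distance at least that term.

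\textbf{Endpoint terms.} Surjectivity and monotonicity force $p_0$ to be matched to $a$ and $p_n$ to $b$. Hence $\|p_0-a\|\le\delta$ and $\|p_n-b\|\le\delta$.

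\textbf{Hausdorff terms.} Take a vertex $p_i$ with $p_i^*<_L a$.
\begin{itemize}
\item Its matched point $q_i$ lies in $\lab$, so $p_i^*<_L a\le_L q_i$.
\item By \cref{fac_on_line_convex}, the distance from $p_i$ grows monotonically along $L$ away from $p_i^*$, so $\|p_i-q_i\|\ge\|p_i-a\|$.
\end{itemize}
This bounds $\dhl{P,\lab}$ by $\delta$. The same argument on the other side, using $b$, bounds $\dhr{P,\lab}$. For $\dhm{P,L}$, simply note that $\|p_i-q_i\|\ge\min_{q\in L}\|p_i-q\|=\|p_i-p_i^*\|$.

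In the $\mathbb{L}_\infty$ case $p_i^*$ may be a segment, and the comparison convention (where the set starts) must be handled. If $p_i^*<_L a$ in the paper's sense but part of the set reaches beyond $a$, then $a$ itself is a nearest point. The inequality still holds, since every point of $L$ is at least as far from $p_i$ as the nearest distance $\|p_i-a\|$.

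\textbf{Backpair term.} Take a backpair $(p_i,p_j)$, so $i<j$ and $p_i^*>_L p_j^*$, with matched points $q_i\le_L q_j$. We need
\[
\max(\|p_i-q_i\|,\|p_j-q_j\|)\ge B_{(p_i,p_j)}(L).
\]
Let $r$ be the optimal point; by \cref{fac_opt_midl}, $p_j^*\le_L r\le_L p_i^*$. Since $q_i\le_L q_j$, either $q_i\le_L r$ or $q_j\ge_L r$.
\begin{itemize}
\item If $q_i\le_L r\le_L p_i^*$, then \cref{fac_on_line_convex} gives $\|p_i-q_i\|\ge\|p_i-r\|$.
\item If $q_j\ge_L r\ge_L p_j^*$, then similarly $\|p_j-q_j\|\ge\|p_j-r\|$.
\end{itemize}
In both cases we need $\|p_i-r\|$ and $\|p_j-r\|$ each to equal $B_{(p_i,p_j)}(L)$, which holds by \cref{fact_backpair_equality1} when $r$ lies strictly between $p_j^*$ and $p_i^*$.

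\textbf{Boundary cases (main obstacle).} The delicate part is when $r=p_i^*$ or $r=p_j^*$. Take $r=p_i^*$, so $\|p_i-r\|\ge\|p_j-r\|$ and $B=\|p_i-r\|$.
\begin{itemize}
\item If $q_i\le_L r$: the first bullet above gives $\|p_i-q_i\|\ge\|p_i-r\|=B$ directly.
\item If $q_i>_L r$: then $q_j\ge_L q_i>_L r$, so both matched points lie beyond $r$. Since $r=p_i^*$ minimizes the distance to $p_i$ over $L$, $\|p_i-q_i\|\ge\|p_i-p_i^*\|=B$.
\end{itemize}
The case $r=p_j^*$ is symmetric. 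For the $\mathbb{L}_\infty$ norm, where $r$ and the $p^*$ may be sets, the same argument is applied using the endpoints of these sets appropriately. This is where I expect the most care to be needed.

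Taking the maximum over all six terms and the infimum over matchings yields $D_F(P,\lab)\ge\df{P}{\lab}$.
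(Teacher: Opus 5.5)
Your proof is correct and is essentially the paper's argument. The paper argues by contradiction, assuming $D_F<B_{(p_i,p_j)}(L)$, and takes the endpoint and Hausdorff terms as immediate from the definition. It disposes of the boundary case $r=p_i^*$ or $r=p_j^*$ through the projection (Hausdorff) term, as you do. In the interior case it combines $\|p_i-r\|=\|p_j-r\|$ with \cref{fac_on_line_convex} and monotonicity of the matching, just as your direct case split on $q_i\le_L r$ versus $q_j\ge_L r$ does. One small point: your subcase split in the boundary case is unnecessary, since $\|p_i-q_i\|\ge\|p_i-p_i^*\|$ holds for every $q_i$.
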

\sloppypar
\begin{proof}
    By definition $D_F(P,\lab)\geq ||p_0-a||,\;||p_n-b||,\; \delta_{\overrightarrow{H}}(P,\lab)$. Now suppose $D_F(P,\lab)<B(P,L)$; so there exists some $p_i,p_j \in P$ such that $D_F(P,\lab)<B_{(p_i,p_j)}(L)$ $=d$ (say) with $i<j, \: p_i^*>_L p_j^*$. Let $q$ be the optimal point on $L$ for this backpair. From \cref{fac_opt_midl} $p_i^*\geq q \geq p_j^*$. If the equality satisfies in either side, $||p_i-p_i^*||$ or $||p_j-p_j^*||$ is equal to $d$ which implies $\delta_{\overrightarrow{H}}(P,\lab)\ge d$ contradicting our assumption $d>D_F(P,\lab)\geq \delta_{\overrightarrow{H}}(P,\lab)$.
    So, $p_i^*>q >p_j^*$ and from \cref{fact_backpair_equality1},
     $||p_i-q||=||p_j-q||=d$. Now, for all $q'>_L q$, $||p_j-q'||\ge d$ and for all $q''\le_L q$, $||p_i-q''||\ge  d$ according to \cref{fac_on_line_convex}.
     
    Let $\alpha, \beta$ be the mapping function corresponding to $P$ and $\lab$ respectively as mentioned in \cref{eq_frd_alt_gadou}. Now as, $D_F(P,\lab)<d$, if $p_i=P(\alpha(t))$ then $\lab(\beta(t)) >_L q$ otherwise $||P(\alpha(t))-\lab(\beta(t))||\geq d$ and $||D_F(P,\lab)||\geq d$. Let $p_j=P(\alpha(t'))$. So, $t'\geq t$ as $j>i$ and $\lab(\beta(t'))\geq_L \lab(\beta(t))>_L q$. But this leads to a contradiction as $||p_j-\lab(\beta(t'))||\ge_L ||p_j-q||=d$. So, $D_F(P,\lab)\ge d$.
\end{proof}

\begin{lemma}\label{lm_berg_suf}
    $D_F(P,\lab)\leq \df{P}{\lab}$
\end{lemma}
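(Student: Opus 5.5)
Set $\varepsilon := \df{P}{\lab}$. The plan is to construct explicitly a monotone matching between $P$ and $\lab$ in which every matched pair is at distance at most $\varepsilon$; by \cref{eq_frd_alt_gadou} this gives $D_F(P,\lab)\le\varepsilon$.

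The natural target is the projection map $p\mapsto p^*$, clamped to $[a,b]$. It fails to be monotone only at backpairs, so we repair it with a running maximum. For each vertex $p_k$ let $c_k$ be $p_k^*$ clamped to $\lab$, and set $c_0=a$. Define
\[ m_k := \max_{L}\{c_0,\dots,c_k\}, \]
the furthest point along $L$ among these. We match vertex $p_k$ to the point $m_k$, and match the final vertex $p_n$ to $b$. The sequence $m_k$ is non-decreasing along $L$. For the edge $p_kp_{k+1}$ we first move along $P$ from $p_k$ to $p_{k+1}$ while the segment position stays at $m_k$, and then move along $\lab$ from $m_k$ to $m_{k+1}$ while $P$ stays at $p_{k+1}$. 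This is a valid monotone reparametrization.

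We then bound the distance of each matched pair.
\begin{enumerate}
\item The endpoints satisfy $\|p_0-a\|\le\varepsilon$ and $\|p_n-b\|\le\varepsilon$ by the definition of $\df{P}{\lab}$.
\item Consider a vertex $p_k$ with $m_k=c_k$. If $p_k^*\in[a,b]$, then $\|p_k-c_k\|\le\dhm{P,L}$. If $p_k^*<_L a$, then $c_k=a$ and the distance is bounded by $\dhl{P,\lab}$. The case $p_k^*>_L b$ is symmetric and uses $\dhr{P,\lab}$.
\item Consider a vertex $p_k$ with $m_k=c_i>_L c_k$ for some $i<k$. Then $p_i^*>_L p_k^*$, so $(p_i,p_k)$ is a backpair. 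Let $q$ be its optimal point. By \cref{fac_opt_midl} we have $p_k^*\le_L q\le_L p_i^*$.
  \begin{itemize}
  \item If $m_k\le_L q$, then $p_k^*\le_L m_k\le_L q$, and \cref{fac_on_line_convex} gives $\|p_k-m_k\|\le\|p_k-q\|\le B(P,L)$.
  \item If $m_k>_L q$, then, since $m_k$ is itself a clamped projection of $p_i$, the point $p_i^*$ lies at or beyond $m_k$ (up to the clamping). The bound then comes from $\|p_i-q\|$ together with the relation of \cref{fact_backpair_equality1}, and possibly from considering $p_k$ against $m_k$ through the backpair optimum directly.
  \end{itemize}
\item For points in the interior of an edge, use convexity. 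The function $t\mapsto\|P(t)-x\|$ is convex under both norms, so the distance is at most the maximum over the endpoints. For the horizontal moves, the points between $m_k$ and $m_{k+1}$ lie between $p_{k+1}^*$ and $m_{k+1}$ (or between $m_k$ and the clamped $p_{k+1}^*$), so \cref{fac_on_line_convex} bounds them by the endpoint distances.
\end{enumerate}

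The main obstacle is step 3 in the case $m_k>_L q$: a vertex matched to a point far ahead because of an earlier vertex whose projection lies beyond the backpair's optimal point. The cleaner way may be to choose the matched point differently, namely to match $p_k$ to the point $\min_L\bigl(m_k, \text{some backpair optimum}\bigr)$. A more promising fix is to let $m_k$ be the running maximum not of projections but of backpair-optimal points $q_{ik}$, clamped into $[c_k,\ldots]$. With that choice the distance $\|p_k-m_k\|$ is exactly a backpair distance, bounded by $B(P,L)$, and monotonicity still holds because we take a maximum.

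The $\mathbb{L}_\infty$ case, where the projections and optimal points are intervals, is handled by choosing the starting point of each interval consistently, as in the paper's convention for comparing such sets.
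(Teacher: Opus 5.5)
Your construction has two genuine gaps. The case you flag in step 3 cannot be closed, either by your original rule or by the proposed fixes.

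\textbf{The matched points are the wrong ones.} A vertex's position must anticipate \emph{later} vertices, and no running maximum of projections or of pairwise backpair optima does that. Take $L$ to be the $x$-axis, $a=(0,0)$, $b=(10,0)$, and let $P$ have vertices on $L$ at $x$-coordinates $0,6,4,3,10$. All endpoint and Hausdorff terms vanish. The backpairs $(p_1,p_2),(p_1,p_3),(p_2,p_3)$ have distances $1,\tfrac32,\tfrac12$, so $\varepsilon=\tfrac32$. Your rule gives $m_1=m_2=m_3=6$, hence $\|p_3-m_3\|=3>\varepsilon$. The fix fails too. Since $p_1$ has no earlier backpair partner, any version clamped into $[c_1,\ldots]$ has $m_1=6$. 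With a running maximum this gives $\|p_2-m_2\|\ge 2>\varepsilon$. Without it, $m_2=\max(c_2,q_{12})=5<m_1$, so monotonicity breaks. In fact $p_1$ must be matched to exactly $4.5$, the \emph{first} point of $\ell(a,b)$ within $\varepsilon$ of $p_1$.

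\textbf{The staircase in step 4 is invalid in any case.} Moving along $P$ from $p_k$ to $p_{k+1}$ with the segment parked at $m_k$ requires $\|p_{k+1}-m_k\|\le\varepsilon$, and nothing controls this; in the example, $\|p_1-a\|=6$. Convexity of $t\mapsto\|P(t)-x\|$ bounds the interior only by the two endpoint values, and one of them is this uncontrolled one.

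\textbf{The missing idea} is a greedy leftmost matching with respect to $\varepsilon$-reachable intervals. The paper's proof encodes exactly this through the points $p_i^e,p_j^e$ and the interval-ordering criterion it cites from Alt and Godau and de Berg et al.
\begin{itemize}
\item Let $I_k=[l_k,r_k]$ be the set of points of $\ell(a,b)$ within $\varepsilon$ of $p_k$. It is non-empty because the clamped projection $c_k$ lies in it, by $\dhl{P,\ell(a,b)}$, $\dhr{P,\ell(a,b)}$ and $\dhm{P,L}$. Also $l_0=a$ and $r_n=b$.
\item Put $x_k=\max_{j\le k}l_j$ for $k<n$ and $x_n=b$. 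Feasibility $x_k\in I_k$ reduces to $l_j\le r_k$ for all $j<k$.
\item If $p_j^*\le_L p_k^*$, this holds because $l_j\le c_j\le c_k\le r_k$.
\item If $(p_j,p_k)$ is a backpair, take the point $q\in L$ with $\max(\|p_j-q\|,\|p_k-q\|)\le B(P,L)\le\varepsilon$. It lies in both $\varepsilon$-intervals on the whole line $L$, and together with $I_j,I_k\neq\emptyset$ this gives $l_j\le r_k$.
\item Finally, join $(p_k,x_k)$ to $(p_{k+1},x_{k+1})$ by moving along both simultaneously and linearly, not in a staircase. The difference $(1-t)(p_k-x_k)+t(p_{k+1}-x_{k+1})$ has norm at most $\varepsilon$ by the triangle inequality, under both $\mathbb{L}_2$ and $\mathbb{L}_\infty$.
\end{itemize}
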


\begin{proof}
Say, $\df{P}{\lab}=d$. Alt and Godau~\cite{AltG95} and Authors in ~\cite{data_structure_frd} showed that if the following satisfies: $|p_0-a|,|p_0-b| \le d$ and for any $i<j$ with $p_i,p_j \in P$, there exist $a\le_L q \le_L q' \le_L b$ such that $|p_i-q|, |p_j-q'|\le d$, we can say $D_F(P,\lab)\leq d$.
 
As, $\df{P}{\lab}=d$, it comes straightforward that $|p_0-a|,|p_0-b| \le d$. Now for any $i<j$ where $p_i,p_j \in P$, $|p_i-p_i^*|, |p_j-p_j^*|\le d$. So, if $p_j^*\ge_L p_i^*$, we have $q=p_i^*$ and $q'=p_j^*$. When $p_j^*<_L p_i^*$, consider the points $p_i^e$ closest to $a$ and $p_j^e$ closest to $b$ for which $|p_i-p_i^e|\le d$ and $|p_j-p_j^e|\le d$.

If $p_j^e<_L p_i^e$, $||p_i-r||>d$ and $||p_j-r||>d$ for all points $r$ in the interval $[a,p_j^e)$ and $(p_i^e,b]$ respectively. Consequently there does not exist a point $r \in [a,b]$ such that $\max(||p_i-r||,||p_j-r||\le d)$ which gives rise to a contradiction as $\df{P}{\lab}=d \implies B(P,L)\le d$. So, $b\ge_Lp_j^e\ge_L p_i^e\ge_L a$ and the criterion stated above is satisfied.
\end{proof}

\cref{lm_berg_nec} and \cref{lm_berg_suf} constitute \cref{thm_frd}.

\begin{theorem}\label{thm_frd}
    For a curve $P$ and a line segment $\lab \in L$ in $\mathbb{R}^d$ the \frd between the curve and the line segment is $\delta_F(P,\lab)$ under both $\mathbb{L}_2$ and $\mathbb{L}_\infty$ norm where,
    $$\df{P}{\lab} = \max\{ ||p(0) - a|| , ||p(n) - b||, \dhl{P,\lab},\dhr{P,\lab},\\ \dhm{P,L}, B(P,L) \}$$
\end{theorem}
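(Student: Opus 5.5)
The statement is the conjunction of \cref{lm_berg_nec} and \cref{lm_berg_suf}, so the plan is to establish the two inequalities $D_F(P,\lab)\ge \df{P}{\lab}$ and $D_F(P,\lab)\le \df{P}{\lab}$ and combine them. The only genuinely norm-dependent ingredients are \cref{fac_on_line_convex} and \cref{fact_backpair_equality1}. Both follow from convexity of $q\mapsto \|p-q\|$ restricted to $L$, which holds for every norm, including $\mathbb{L}_\infty$. For $\mathbb{L}_\infty$ the minimizer set $p^*(L)$ may be an interval, and I would handle this with the paper's convention of comparing left endpoints.

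For the lower bound I would argue term by term. The endpoint terms $\|p_0-a\|$ and $\|p_n-b\|$ are forced because reparametrizations are surjective and monotone, so the first vertex is matched to $a$ and the last to $b$. Each of $\dhl{\cdot},\dhr{\cdot},\dhm{\cdot}$ is bounded by the Hausdorff distance, using Lemma~5 of \cite{data_structure_frd}, and the Hausdorff distance is at most the Fr\'echet distance. For the backpair term I would follow the argument of \cref{lm_berg_nec}. Take a backpair $(p_i,p_j)$ with optimal point $q$. If $q$ coincides with $p_i^*$ or $p_j^*$, the backpair distance equals a Hausdorff term. Otherwise $\|p_i-q\|=\|p_j-q\|=d$, and a matching of width less than $d$ would have to send $p_i$ strictly after $q$ and $p_j$ at or before $q$. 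Since $i<j$, monotonicity of $\beta$ makes this impossible.

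For the upper bound I would use the Alt--Godau free-space characterization specialised to a segment, as in \cite{data_structure_frd}. It suffices that the endpoints are within $d$, and that for all $i<j$ there are points $q\le_L q'$ on $\lab$ with $\|p_i-q\|\le d$ and $\|p_j-q'\|\le d$. The reason is that the free-space interval on the segment for each edge of $P$ is convex, so pairwise vertex conditions give a monotone path. For a non-backpair one takes the projections, clamped to $[a,b]$, which works by the Hausdorff terms and \cref{fac_on_line_convex}. For a backpair, the sublevel sets $\{r:\|p_i-r\|\le d\}\cap\ell(a,b)$ are intervals by convexity. If the leftmost feasible point for $p_i$ lay after the rightmost feasible point for $p_j$, the two intervals would be disjoint. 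That would force $B_{(p_i,p_j)}(L)>d$, or would push the optimal point outside $[a,b]$ in a way contradicting the $\delta_{\vec h\mathcal L}$ or $\delta_{\vec h\mathcal R}$ bounds.

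I expect the main obstacle to be justifying the reduction from the free-space criterion to pairwise vertex conditions in general dimension and under $\mathbb{L}_\infty$. Two cases need care. First, the optimal backpair point may lie outside $\ell(a,b)$, so the intersection must be clamped to the segment. Second, the $\mathbb{L}_\infty$ balls are not strictly convex, so projections may be non-unique. I would handle both by working only with convexity of sublevel sets on $L$, never with uniqueness of the minimizer.
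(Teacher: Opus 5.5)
Your proposal is correct and follows the paper's proof essentially step for step. The lower bound uses the endpoint and Hausdorff terms, plus the monotonicity contradiction for a backpair whose optimal point lies strictly between $p_j^*$ and $p_i^*$ (\cref{lm_berg_nec}). The upper bound uses the Alt--Godau pairwise criterion, with projections for non-backpairs and disjointness of the feasible intervals for backpairs (\cref{lm_berg_suf}).

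Two of your details make explicit steps that the paper's proof of \cref{lm_berg_suf} glosses over. Your clamping of projections to $\ell(a,b)$ covers the paper's choice $q=p_i^*$, $q'=p_j^*$, which it makes without checking that these points lie on the segment. Your treatment of a backpair whose optimal point falls outside $\ell(a,b)$ covers the paper's appeal to $B(P,L)\le d$, which tacitly assumes the common feasible point lies in $[a,b]$.
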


In this paper, we will deal with multiple polygonal curves. Let $\pi =\{\pi_1,\pi_2,...\}$ be a set of  polygonal curves, we will denote $i^{th}$ vertex of $k^{th}$ curve by $\pi_k(i)$. We define \frd between a line segment $\ell(a,b)$ and $\pi$ as: $$\delta_F(\pi,\lab)=\max_{\substack{i}}\delta_F(\pi_i,\lab)$$

We will only write the notation of the line segment (or the line where it belongs to) in the argument of the distance functions, where the polygonal curve is known from the context.

\section{Introduction to some distance functions}
We discussed some distance functions in \cref{sec_prelim}; in this section, we introduce a new type of distance function that is essential for computing the $(1,2)$-center of a set of polygonal curves.

\begin{definition}[Start-Backpair]
Two vertices from the set of the curves which include the start vertex of some curve satisfying the condition $\pi_k(0)^*>_L \pi_l(i)^*$ is referred to as \emph{start-backpair} w.r.t $L$. If both vertices are the start point, the one whose closest point on $L$ appears later while traversing $L$ in its direction is considered as the start vertex of the start-backpair.
\end{definition}

\begin{figure}[H]
    \centering
    \includegraphics[width=0.8\linewidth]{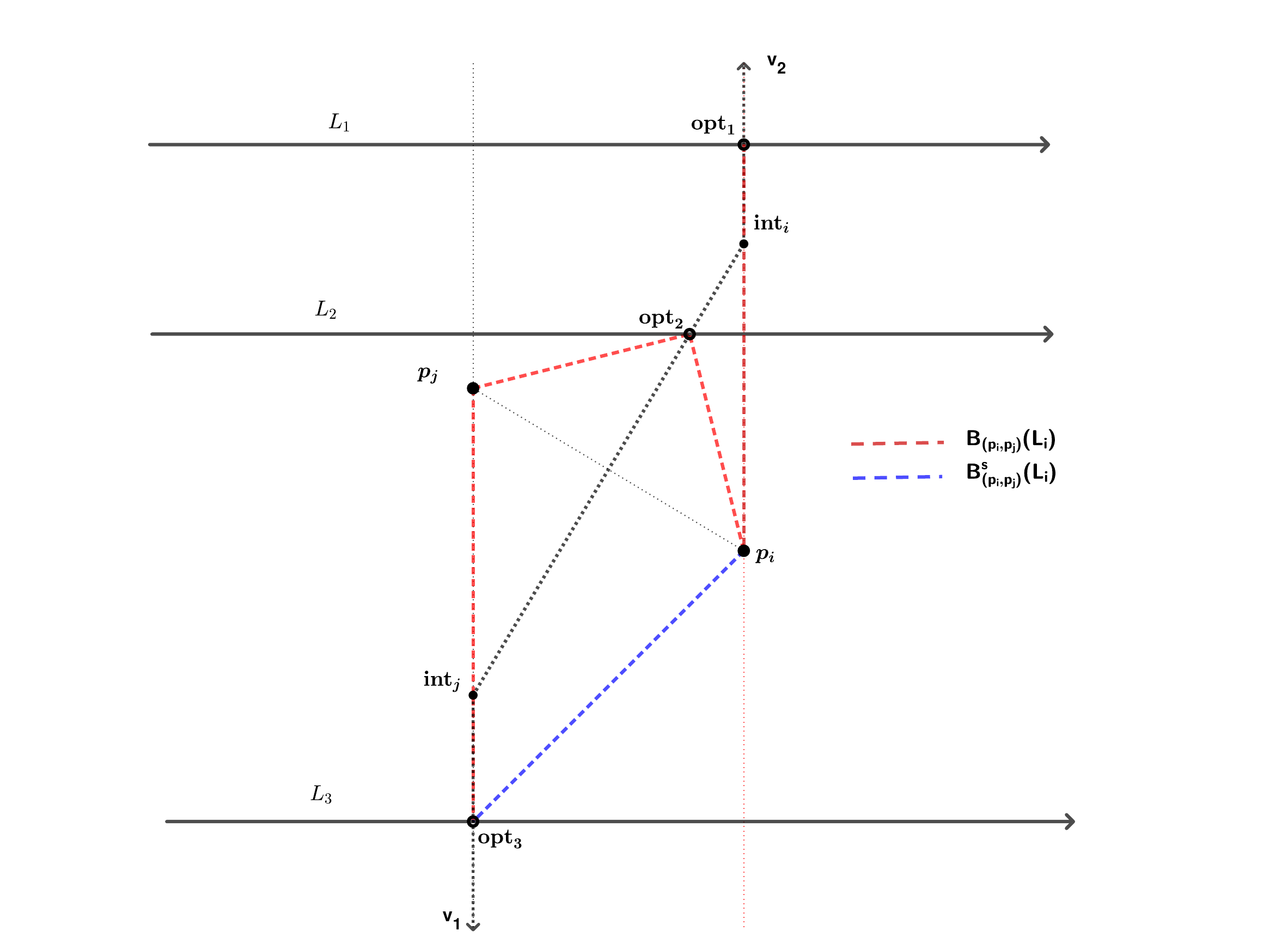}
    \caption{Intersection of line with path $\text{v}_1-\text{int}_j-\text{int}_j-\text{v}_2$ is the optimal point for $(p_i,p_j)$. Considering $p_i$ as the start point, for $L_1,L_2$ backpair/ start backpair distance is same, for $L_3$ backpair distance is $|p_j-\text{opt}_3|$ but start backpair distance is $|p_i-\text{opt}_3|$}
    \label{fig_optimal_bpoint}
\end{figure}

Although the criteria for the choice of vertices for backpair and start-backpair differ, the geometric condition is similar. The optimal point for a start-backpair and a backpair is also determined in the same way, but the start-backpair distance is defined differently.

\begin{definition}
    The distance of the start-vertex of the start-backpair from the optimal point for this backpair is considered as \emph{the start-backpair distance} even if the distance from the other point is higher. \[B^s_{(\pi_k(0),\pi_l(i))}(L)=\min (\min_{\substack{p\in L}}max(||\pi_k(0)-p||,||\pi_l(i)-p||),|\pi_k(0)-\pi_l(i)^*| )\]
\end{definition}

The backpair distance defined in \cref{sec_prelim} was first introduced in \cite{data_structure_frd}. In this section, we discuss the backpair distance along with the newly defined start backpair distance. In \cref{fig_optimal_bpoint}, the optimal point for a backpair and a start-backpair, along with their respective distances under $\mathbb{L_2}$ norm, is illustrated for a pair of vertices $(p_i, p_j)$, where $p_i$ is taken as the start point of the start-backpair. Consider the lines of orthogonal projection of $p_i,p_j$ on some line $L_k$. The intersection points of the perpendicular bisector of the line segment $\overline{p_i,p_j}$ with these lines are $\text{int}_i$ and $\text{int}_j$ respectively. If the bisector intersects $L_k$ between $\text{int}_i$ and $\text{int}_j$, the intersection point is the optimal point for $p_i,p_j$; the intersection point of $L_k$ with one of the orthogonal projection lines is the optimal point. Backpair distance is the maximum of the distance of the optimal point from the vertices in concern, but the start backpair distance is bounded by the distance of the optimal point to the start-vertex. 

\begin{definition}
Two vertices from the set of the curves which include the end vertex of some curve satisfying the condition $\pi_k(n) ^*$ < $_L \pi_l(i)^*$ are referred to as \emph{end-backpair} w.r.t $L$. If both vertices are endpoints, the one whose closest point on $L$ appears earlier while traversing $L$ in its direction is considered as the end vertex of the end-backpair. The end-backpair distance is defined as: \[B^e_{(\pi_k(n),\pi_l(i))}(L)=\min (\min_{\substack{p\in L}}max(||\pi_k(n)-p||,||\pi_l(i)-p||),|\pi_k(n)-\pi_l(i)^*| )\]
\end{definition}

The start and end-backpair distance of a polygonal curve from a set of polygonal curves with a line are defined in the following way-
$$B^s(\pi_k,L)=\max_{\substack{\forall\; i,\; l\\\pi_l(i)^*<_L\pi_k(0)^*}}B^s_{\pi_k(0),\pi_l(i)}(L)$$
$$B^e(\pi_k,L)=\max_{\substack{\forall\; i,\; l\\\pi_l(i)^*>_L\pi_k(n)^*}}B^e_{\pi_k(n),\pi_l(i)}(L)$$
Figure \ref{back_dist} illustrates the behavior of the backpair and start-backpair distance functions for curves in $\mathbb{R}^2$ centered at a fixed slope.

\begin{definition}
    Consider a backpair or start/end-backpair $(p_i,p_j)$ and any positive value $s$ greater than or equal to its backpair or start/end-backpair distance. The optimal point for this backpair is $q$. Let $x\le_Lq$ and $y\ge_L q$ be two such point so that for all $t<_L  x$, $|p_i-t|> s$ and for all $t >_L  y$, $|p_j-t|> s$ but $|p_i-x|=|p_j-y|= s$. We call this segment $\overline{xy}$ the $s$-optimal segment for this backpair or start/end-backpair on $L$.
\end{definition}

\begin{figure}[H]
    \centering    
    \includegraphics[width=0.8\linewidth]{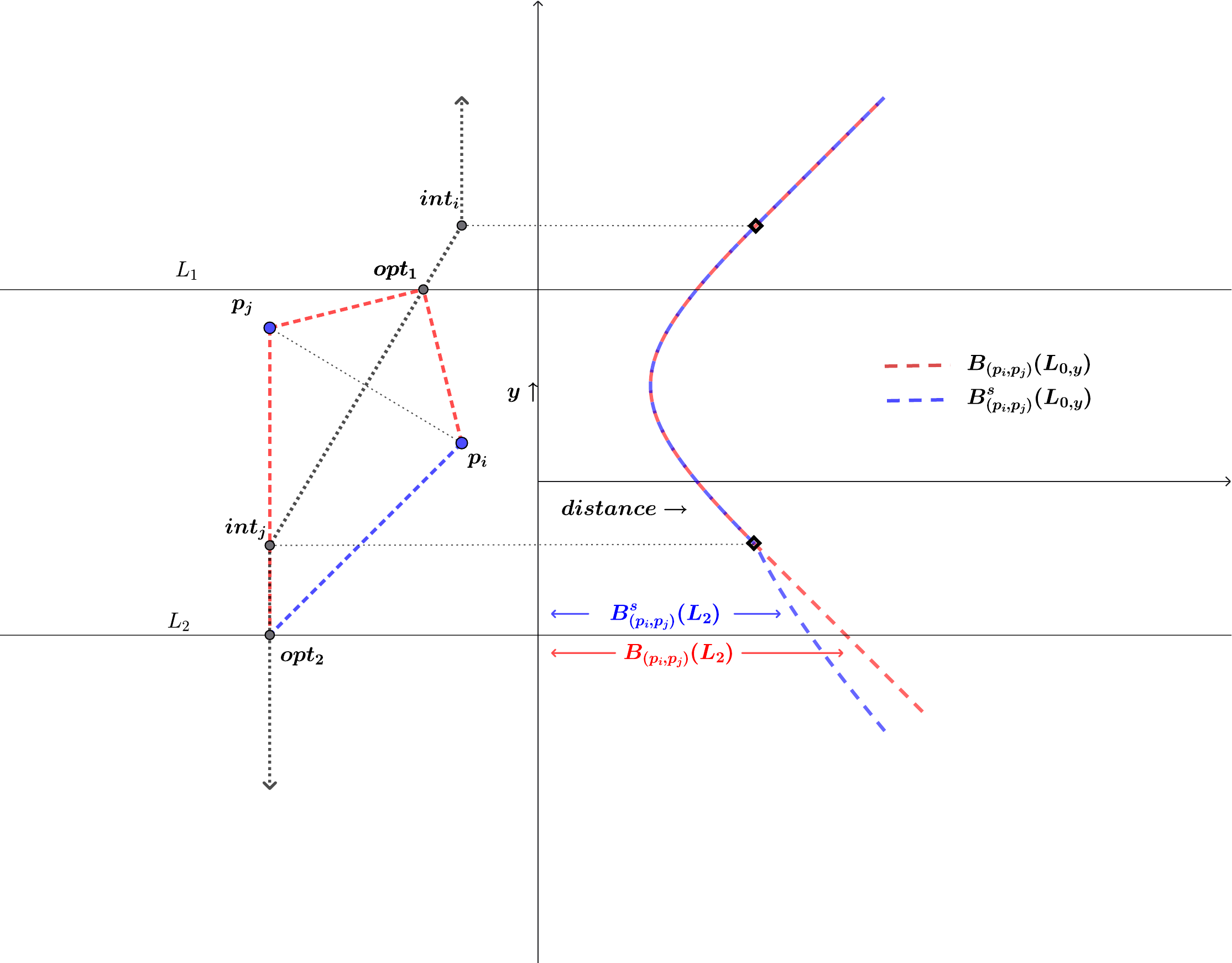}
    \caption{Backpair/ Start backpair distance function considering $p_i$ as start point}
    \label{back_dist}
\end{figure}

From \cref{fac_opt_midl} and \cref{fac_on_line_convex}, it follows that:
\begin{observation}
    The $s$-optimal segment for any start/end-backpair is continuous.
\end{observation}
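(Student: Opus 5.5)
The claim is that the $s$-optimal segment $\overline{xy}$ of a backpair (or start/end-backpair) $(p_i,p_j)$ is a genuine interval of $L$: the points $x,y$ exist, $x\le_L q\le_L y$, and $\overline{xy}$ is exactly the set of points $t\in L$ with $|p_i-t|\le s$ and $|p_j-t|\le s$. I will derive this from the one-dimensional sublevel sets of the two distance functions restricted to $L$.

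\textbf{Step 1: each distance function has an interval sublevel set.} For a point $p$ and a line $L$, consider $f_p(t)=\|p-t\|$ for $t\in L$. By \cref{fac_on_line_convex}, $f_p$ is non-increasing as $t$ moves toward $p^*$ along $L$ and non-decreasing as $t$ moves away from $p^*$ on either side. Under $\mathbb{L}_\infty$ with an axis-parallel $L$, $p^*$ is a closed interval and $f_p$ is constant on it. Hence for every $s\ge \min f_p$ the set $S_p(s)=\{t\in L: f_p(t)\le s\}$ is connected. It is also closed, because $f_p$ is continuous, and bounded, because $f_p\to\infty$ along $L$ under either norm. So $S_p(s)=[u_p,v_p]$ with $u_p\le_L p^*\le_L v_p$.

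\textbf{Step 2: the intersection is nonempty and contains $q$.} Since $s$ is at least the backpair distance $\min_{t}\max(f_{p_i}(t),f_{p_j}(t))$, the optimal point $q$ lies in $S_{p_i}(s)\cap S_{p_j}(s)$. This intersection of two intervals is therefore a nonempty interval.

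For a start/end-backpair the distance may instead be $|\pi_k(0)-\pi_l(i)^*|$, which can be smaller than the max at $q$. Here I would use \cref{fact_backpair_equality1}: only the start vertex's constraint is binding, so $q$ still lies in $S_{p_i}(s)$. Which of the two sets participates must follow the definition of the $s$-optimal segment, and I will check this case separately.

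\textbf{Step 3: identify $x$ and $y$ with the interval endpoints.} By \cref{fac_opt_midl}, $p_j^*\le_L q\le_L p_i^*$. The left endpoint of the intersection is governed by $p_i$: every $t<_L x$ has $|p_i-t|>s$. This works because $q\le_L p_i^*$, so the left boundary $u_{p_i}$ is the relevant one, and $u_{p_j}\le_L p_j^*\le_L q$ does not cut further. Symmetrically, the right endpoint is $y=v_{p_j}$.

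It remains to check that $u_{p_i}\ge_L u_{p_j}$ and $v_{p_j}\le_L v_{p_i}$, so that the intersection is exactly $[x,y]$. This follows from monotonicity. Suppose $u_{p_j}>_L u_{p_i}$. Then points just left of $u_{p_j}$ would satisfy $f_{p_j}>s$ while $f_{p_i}\le s$. That is not a contradiction in itself, so the clean formulation is to define $x=u_{p_i}$ and $y=v_{p_j}$, as the definition demands. Since $u_{p_i}\le_L q\le_L v_{p_j}$, the segment $[x,y]$ is a single interval, with $|p_i-x|=s$ and $|p_j-y|=s$ by continuity at the boundary.

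\textbf{Main obstacle.} The delicate part is the degenerate $\mathbb{L}_\infty$ axis-parallel case, where $p^*$ and $q$ are sets rather than points. There I will carry the argument with the leftmost and rightmost points of these sets, consistent with the paper's comparison convention, and verify that $f_p$ is still monotone outside $p^*$, which holds because it is the maximum of coordinatewise absolute values of affine functions.
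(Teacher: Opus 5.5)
\textbf{Verdict: genuine gap.} Your route is the paper's own one-line justification made explicit. \cref{fac_on_line_convex} makes each sublevel set $S_p(s)$ an interval around $p^*$, and \cref{fac_opt_midl} places $q$ between $p_j^*$ and $p_i^*$. For an ordinary backpair this works: $s\ge B_{(p_i,p_j)}(L)$ puts $q$ in both $S_{p_i}(s)$ and $S_{p_j}(s)$, so $x=u_{p_i}\le_L q\le_L v_{p_j}=y$.

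Drop your opening thesis that $\overline{xy}=S_{p_i}(s)\cap S_{p_j}(s)$, together with the ``remaining check'' $u_{p_i}\ge_L u_{p_j}$, $v_{p_j}\le_L v_{p_i}$; both are false. Under $\mathbb{L}_2$, take $L$ the $x$-axis directed rightwards, $p_i=(1,10)$, $p_j=(0,0)$. Then $q=(1,0)$ and the backpair distance is $10$. For $s=10.5$ we get $v_{p_i}=1+\sqrt{10.25}<10.5=v_{p_j}$, so $[u_{p_i},v_{p_j}]$ strictly contains the intersection. Your fallback $x=u_{p_i}$, $y=v_{p_j}$ is what the definition prescribes, and it is all the observation needs. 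For Step 1 there is a simpler argument: $t\mapsto\|p-(o+tv)\|$ is a norm composed with an affine map, hence convex. This gives the interval property for both norms at once, including the axis-parallel $\mathbb{L}_\infty$ case.

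The real gap is the case the statement is about. You postpone start/end-backpairs, and the check does not go through from $s\ge B^s$ alone. For a start-backpair with start vertex $p_i$, $s\ge B^s=|p_i-q|$ only gives $q\in S_{p_i}(s)$, i.e.\ the existence of $x$. To get $y$ you need $S_{p_j}(s)\neq\emptyset$ and $v_{p_j}\ge_L q$, i.e.\ $|p_j-q|\le s$.

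By \cref{fact_backpair_equality1} this holds when $q$ lies strictly between the projections or $q=p_i^*$, since then $|p_j-q|\le|p_i-q|=B^s\le s$. When $q=p_j^*$, however, the start-backpair distance is $|p_i-p_j^*|$, which can be far below $|p_j-p_j^*|$. Take $L$ the $x$-axis, start vertex $p_i=(1,0)$ and $p_j=(0,10)$. Then $q=(0,0)$ and $B^s=1$, but for $s=5$ no point $y\in L$ satisfies $|p_j-y|=s$, so the $s$-optimal segment does not exist. This is exactly the hypothesis $s\ge\min f_p$ of your own Step 1 failing for $p_j$.

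The missing ingredient is the extra assumption $s\ge\|p_j-p_j^*\|$, and symmetrically for the non-end vertex of an end-backpair. It does hold where the notion is used, namely $s=D$ in \cref{alg:1-2-center}. Every projection distance $\|\pi_l(i)-\pi_l(i)^*\|$ is among the functions defining $\mathcal{U}_g$, so $D$ dominates it on the optimal line. You should state this hypothesis and invoke it. The paper's one-line appeal to \cref{fac_opt_midl,fac_on_line_convex} passes over the same point.
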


\section{Upper Envelope of Functions}\label{sec_env}
Given a collection of functions $f_1(x_1,x_2,\ldots), f_2(x_1,x_2,\ldots),\ldots$, their upper envelope is
\[
  \U{x_1,x_2,\ldots} = \max_i f_i(x_1,x_2,\ldots).
\]

Consider a set $\pi=\{\pi_1,\pi_2,\ldots,\pi_r\}$ of polygonal curves. We construct the following upper envelopes:
\begin{align*}
  \mathcal{U}_g(x_1,x_2,\ldots) &= \max_i \max\bigl( B(\pi_i,L_{x_1,x_2,\ldots}),\,
                                    \dhm{\pi_i,L_{x_1,x_2,\ldots}} \bigr),\\
  \mathcal{U}_s(x_1,x_2,\ldots) &= \max_i B^s(\pi_i,L_{x_1,x_2,\ldots}),\\
  \mathcal{U}_e(x_1,x_2,\ldots) &= \max_i B^e(\pi_i,L_{x_1,x_2,\ldots}).
\end{align*}

We refer to these simply as $\mathcal{U}_g$, $\mathcal{U}_s$, $\mathcal{U}_e$, and write $\mathcal{U}=\max(\mathcal{U}_g,\mathcal{U}_s,\mathcal{U}_e)$. The functions $B^e_{(\pi_k(n),\pi_l(i))}(L), B^s_{(\pi_k(0),\pi_l(i))}(L), B_{(p_i,p_j}(L)$ are partially defined as the vertices under consideration remains backpair for a certain range of parameter values of $L$.
For $r$ curves with maximum complexity $n$, we have $O(nr^2)$ start and end backpair distance functions contributing to the upper envelopes $\mathcal{U}_s$ and $\mathcal{U}_e$ as we need to consider at most all possible pairs among $nr$ vertices and $r$ start or end vertices. For backpair function at most we need to consider every pair of vertices from each curve so $O(n^2r)$ backpair functions contribute to the upper envelope $\mathcal{U}_g$ and $\dhm{\pi_i, L}$ computes at most $n$ distance functions $|\pi_i(j)-\pi_i(j)^*|$ for all the $r$ curves so no of distance functions is bounded by $O(nr)$ for this. So, the number of underlying distance functions to compute the upper envelopes $\mathcal{U}_s\text{ or }\mathcal{U}_e \text{ and } \mathcal{U}$ is $ O (nr^2)\text{ and }O(n^2r+nr^2)$.

All the distance functions considered in the upper envelopes are partially or totally defined $2(d-1)$-variate functions of constant algebraic degree and constant complexity in both $L_2$ and $L_\infty$ norms. When the curves are restricted to be in $\mathbb{R}^2$, all these functions are surface patches in $3$-dimensional space. Results in \cite{envelope_bound} about construction of upper envelope of such functions reveals that we can construct the upper envelope $\mathcal{U}_s$, $\mathcal{U}_e$, $\mathcal{U}$, in $O\bigl((n^2r+nr^2)^{2+\epsilon}\bigr)$ time for any $\epsilon>0$ and the constants for exact time depend on $\epsilon$. \footnote{In \cite{buchin_eff_frech} the results in \cite{envelope_bound} were also used to obtain the runtime for construction of the upper envelope $ \max_i\bigl( B(\pi_i, L_{x_1,x_2})$ under $\mathbb{L}_2$ norm.}

In \cite{env4d} it has been shown that the expected time to construct the upper envelope of surfaces in $d$-space is $N^{d+\epsilon}$ for any $\epsilon>0$, where $N$ is the number of functions. So we can compute the upper envelope $\mathcal{U}$ in $O\bigl((n^2r+nr^2)^{2(d-1)+\epsilon}\bigr)$ expected time for curves in dimension $d$.

\subsection{Upper envelope in a restricted case}
For curves in $\mathbb{R}^2$, when the center is restricted to lie horizontally,
each function contributing to the upper envelope $\mathcal{U}$ or $\mathcal{U}_s, \mathcal{U}_e$ namely
$||\pi_i(k)-\pi_i(k)^*(L_{0,y})||$, $B_{(\pi_i(j),\pi_i(j'))}(L_{0,y})$,
$B^s_{(\pi_i(0),\pi_k(j))}(L_{0,y})$, and
$B^e_{(\pi_i(n),\pi_k(j))}(L_{0,y})$-is totally defined and unimodal convex in $y$.
This structure lets us adapt the prune-and-search paradigm of
Megiddo~\cite{Megiddo} to compute $\min_y \mathcal{U}(y)$ in linear time.

\paragraph{Decomposition:}
Since the maximum of convex functions is convex, $\mathcal{U}$ itself is convex, and hence has a unique global minimum at $y^{*}$. Each unimodal convex function contributing to $\mathcal{U}$ can be split, at some $y$ where the minimum of that function is attained, into a non-increasing piece and a non-decreasing piece. If $t$ functions contribute to $\mathcal{U}$, this yields a set of $2t$ partially defined monotonic functions $F = \{F_1, F_2, \ldots, F_{2t}\}$, of which exactly $t$ are non-decreasing and $t$ are non-increasing.

\begin{lemma}[Decision procedure]\label{lem_decision}
For any test value $y_m$, we can determine in $O(|F|)$ time whether
$y_m = y^{*}$, $y_m < y^{*}$, or $y_m > y^{*}$.
\end{lemma}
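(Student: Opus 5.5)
The plan is to use the convexity of $\mathcal{U}$ established in the decomposition paragraph, together with the fact that every $F_i\in F$ is monotone on its domain, so that each one tells us locally which way $\mathcal{U}$ is increasing.

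\textbf{Step 1: evaluate the envelope at $y_m$.} In one pass over $F$ I evaluate every piece whose domain contains $y_m$ and compute $\mathcal{U}(y_m)=\max_i F_i(y_m)$. Each $F_i$ has constant algebraic degree and constant complexity, so every evaluation takes $O(1)$ time. In the same pass I record the set $A$ of active pieces, namely those with $F_i(y_m)=\mathcal{U}(y_m)$. Each original convex function contributes at most two pieces (one non-increasing, one non-decreasing) at $y_m$. At the split point both pieces are defined and I keep both.

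\textbf{Step 2: classify the active pieces.} For each active piece I decide whether it is strictly increasing at $y_m$, strictly decreasing at $y_m$, or flat at $y_m$. Since the functions have constant description, I compute the one-sided derivatives of the underlying convex function $f$ at $y_m$ in $O(1)$ time. Because $\mathcal{U}$ is convex, its right derivative at $y_m$ is $\max_{f\in A} f'_+(y_m)$ and its left derivative is $\max_{f\in A} f'_-(y_m)$. This is the standard formula for the directional derivative of a pointwise maximum of finitely many convex functions.

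\textbf{Step 3: read off the answer.} I use the one-sided derivatives of $\mathcal{U}$ at $y_m$:
\begin{itemize}
\item If $\mathcal{U}'_+(y_m)\ge 0$ and $\mathcal{U}'_-(y_m)\le 0$, then $y_m$ is a global minimizer, so $y_m=y^*$.
\item If $\mathcal{U}'_+(y_m)<0$, then $\mathcal{U}$ strictly decreases to the right of $y_m$, so $y_m<y^*$.
\item Otherwise $\mathcal{U}'_-(y_m)>0$, and convexity gives $y_m>y^*$.
\end{itemize}
In terms of the pieces, $\mathcal{U}'_+(y_m)<0$ means every active function has a non-increasing piece that is strictly decreasing to the right of $y_m$. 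Checking this costs $O(|A|)\subseteq O(|F|)$, so the total running time is $O(|F|)$.

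\textbf{Main obstacle: flat minima.} The delicate point is the case where $\mathcal{U}$ has a flat minimum, which the text asserts away by claiming a unique $y^*$. Under $\mathbb{L}_\infty$ some distance functions can indeed be constant on an interval. I would handle this by defining $y^*$ as, for example, the leftmost minimizer. With that convention, $y_m=y^*$ exactly when $\mathcal{U}'_+(y_m)\ge0$ and $\mathcal{U}'_-(y_m)<0$, or when $y_m$ is the left endpoint of the domain. The comparisons above are adjusted accordingly. This adjustment keeps the test $O(1)$ per active function, so the $O(|F|)$ bound is unaffected.
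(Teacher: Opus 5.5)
You have the right overall idea: read off the one-sided slopes of $\mathcal{U}$ at $y_m$ from the active functions and use convexity. This is a sharper version of the paper's test, which only checks whether all active pieces are non-increasing halves ($y^*>y_m$), all are non-decreasing halves ($y^*<y_m$), or neither ($y_m=y^*$). But Step~2 contains an error that breaks Step~3 exactly where it matters.

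The directional-derivative formula for a maximum is $\mathcal{U}'(y_m;d)=\max_{f\in A} f'(y_m;d)$. With $d=+1$ this gives your $\mathcal{U}'_+(y_m)=\max_{f\in A}f'_+(y_m)$. With $d=-1$ it gives
\[
\mathcal{U}'_-(y_m)=-\max_{f\in A}\bigl(-f'_-(y_m)\bigr)=\min_{f\in A} f'_-(y_m),
\]
not the maximum. Just to the left of $y_m$, the envelope is carried by the active function with the smallest left slope.

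With your formula, take $\mathcal{U}(y)=\max(y,-y)$ at $y_m=0$. You get $\mathcal{U}'_+(0)=1$ and $\mathcal{U}'_-(0)=\max(1,-1)=1$, so neither of your first two bullets fires and you answer $y_m>y^*$, although $y_m=y^*$. A kink where a decreasing active function meets an increasing one is the generic shape of the envelope at its minimum. So as written, the procedure misclassifies precisely the point the prune-and-search is looking for.

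With the $\min$ formula the test is correct and still $O(1)$ per piece:
\begin{itemize}
\item $y_m<y^*$ iff every active function has negative right slope;
\item $y_m>y^*$ iff every active function has positive left slope;
\item otherwise $y_m$ is a minimizer.
\end{itemize}
Note the ``every'' in both directions, matching the paper's two bullets. Your $\max$ destroys this on the left side.

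Your flat-minimum patch is also stated too strongly. You claim the leftmost minimizer is characterized by $\mathcal{U}'_+(y_m)\ge 0$ and $\mathcal{U}'_-(y_m)<0$, but this fails at smooth minima, which do occur under $\mathbb{L}_2$. Take a single curve with vertices $(0,0),(1,0),(-1,0),(0,0)$ and horizontal centers. The backpair $\bigl((1,0),(-1,0)\bigr)$ dominates every other function, so $\mathcal{U}(y)=\sqrt{1+y^2}$. Its unique minimizer $0$ has $\mathcal{U}'_-(0)=0$, so your adjusted test rejects it.

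When both one-sided slopes vanish, first-order information cannot separate a strict minimum from a point inside a flat stretch. You need a local check of whether $\mathcal{U}$ is constant on a left neighbourhood of $y_m$, which is still $O(1)$ per constant-degree active piece. Alternatively, split by norm:
\begin{itemize}
\item Under $\mathbb{L}_2$, none of these distance functions is constant on an interval, so flats cannot arise and the corrected unadjusted test suffices.
\item Under $\mathbb{L}_\infty$, the functions are piecewise linear in $y$, and there your strict-inequality criterion is valid.
\end{itemize}
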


\begin{proof}
Evaluate every $F_i \in F$ at $y_m$ and let $F_{m_1}, F_{m_2}, \ldots$ be the functions attaining the maximum value there, i.e.\ the pieces that are active at $y_m$ (this determines $\mathcal{U}(y_m)$ and takes $O(|F|)$ time).
\begin{itemize}
  \item If every active piece $F_{m_j}$ is non-increasing at $y_m$ (or has
  $y_m$ as its top boundary point), then $\mathcal{U}$ is
  decreasing at $y_m$ and $y^{*} > y_m$.
  \item If every active piece $F_{m_i}$ is non-decreasing at $y_m$ (or has
  $y_m$ as its bottom boundary point), then $\mathcal{U}$ is 
  increasing at $y_m$ and $y^{*} < y_m$..
  \item Otherwise, some active piece is non-increasing, and another is
  non-decreasing at $y_m$, so by convexity $y_m = y^{*}$.
\end{itemize}
\end{proof}

\begin{figure}[H]
    \centering
\includegraphics[width=0.9\linewidth]{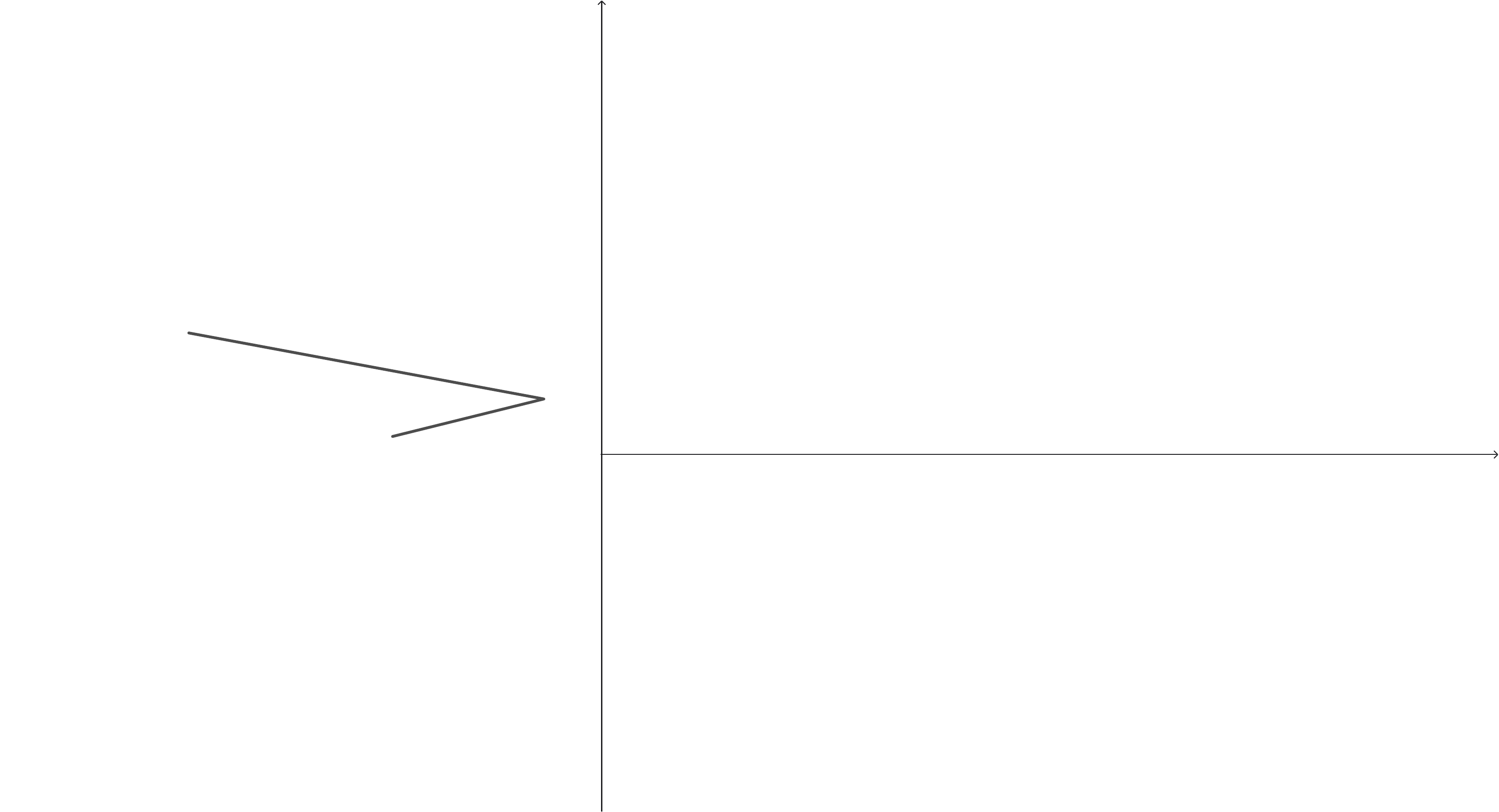}
    \caption{Upper envelope is contributed by the start backpair function and projection distance functions}
    \label{Up_env}
\end{figure}

\paragraph{Pruning step:}
Arbitrarily pair two functions of $F$ to produce $t$ such pairs $p_1, p_2, \ldots, p_t$ (There may be one function left unpaired when $F$ gets reduced). Our goal in each step is to identify some functions that do not contribute to the upper envelope at its minimum. For each pair $p_i$, let $y_i$ be the $y$-coordinate of the intersection of its two functions; if the pair never intersects, we immediately remove the functions with the lower value. As the functions are monotone in $y$, one function of a pair $p_i$ dominates for $y < y_i$ and the other one dominates for $y > y_i$. Let's name these functions:$F^-_i, F^+_i$ respectively.

Now compute the median $y_M$ of $\{y_1, y_2, \ldots, y_t\}$ in $O(t)$ or $O(|F|)$ time. In $O(|F|)$ time we can determine whether
$y_M = y^{*}$, $y_M < y^{*}$, or $y_M > y^{*}$ (\cref{lem_decision}).

If $y^{*} = y_M$, we are done. If $y^{*} > y_M$, for every pair $p_i$ with $y_i \le y_M$, $F_i^+$ dominates $F_i^-$ throughout the region $y \geq y_M$, so $F_i^-$ cannot contribute to $\mathcal{U}$ at $y^{*}$ and may be discarded. As $y_M$ is the median of $y_i$'s, at least half of the $t$ pairs satisfy $y_i \leq y_M$, so this eliminates one function from each such $\frac{t}{2}$ pairs. So we can discard at least $ O(\frac{|F|}{4})$ functions taking the unpaired function and pairs where functions do not intersect into consideration too. 

Analogously, if $y^{*} < y_M$, $O(|F|/4)$ functions are eliminated.

\paragraph{Running time:}
By repeating the pruning procedure, we can determine $y^*$, where the upper envelope attains its minimum, as we go on discarding the functions that do not contribute to the upper envelope at $y^*$. Each pruning round removes $ O(\frac{|F|}{4})$ functions in $O(|F|)$ time. The recurrence relation function for the runtime complexity of computing $y^*$ is:
$$T(|F|) = T(O(\frac{3}{4}|F|)) + O(|F|) = O(|F|)$$

So, it follows that for curves in $\mathbb{R}^2$ and while the center is restricted to be horizontal, we can find the minimum value of the upper envelope along with the corresponding parameter values in $O(n^2r+nr^2)$ time.

\section{Exact Algorithm for the \texorpdfstring{$(1,2)$}{(1,2)}-Center Problem}
Consider the upper envelopes:$\mathcal{U}_s,\mathcal{U}_e$ and $\mathcal{U}$. We find $(x_1^0,x_2^0,\ldots x_{2(d-1)}^0)$ for which the minimum value of $\mathcal{U}(x_1,x_2,\ldots x_{2(d-1)})$ is attained. We find the start and end backpair contributing to the $\mathcal{U}_s$ and $\mathcal{U}_e$ at $(x_1^0,x_2^0,\ldots x_{2(d-1)}^0)$. The optimal point for that start and end backpair be $a,b$ on $L_{x_1^0,x_2^0,\ldots x_{2(d-1)}^0}$  respectively. The $(1,2)$-center for the family of polygonal curves is $\overline{ab}$.
\subsection[(1,2)-center algorithm]{\texorpdfstring{$\mathbf{(1,2)}$}{(1,2)}-center algorithm}

\begin{algorithm}[H]
\DontPrintSemicolon
\caption{\textsc{Optimal-$(1,2)$-Center}}
\label{alg:1-2-center}
\KwIn{A set $\pi = \{\pi_1, \pi_2, \ldots, \pi_r\}$ of polygonal curves in $\mathbb{R}^d$.}
\KwOut{A line segment $\overline{ab}$ realizing an optimal $(1,2)$-center, and its value $D = \max_i \delta_F(\pi_i, \overline{ab})$.}
\BlankLine
Construct the upper envelopes $\mathcal{U}_s$, $\mathcal{U}_e$, and $\mathcal{U} = \max(\mathcal{U}_g, \mathcal{U}_s, \mathcal{U}_e)$\;

Compute the parameter values $(x_1^0, x_2^0, \ldots, x_{2(d-1)}^0)$ for which $\mathcal{U}$ attains minimum value, and set $D \gets \mathcal{U}(x_1^0, x_2^0, \ldots, x_d^0)$\;
\label{line:minimize}
Let $L \gets L_{x_1^0, x_2^0, \ldots, x_{2(d-1)}^0}$\;
\BlankLine
\tcp{Recover the start endpoint from the start backpair}
Find the $D$-optimal segment for all start backpair $\bigl(\pi_k(0), \pi_l(i)\bigr)$ w.r.t $L_{x_1^0, x_2^0, \ldots, x_{2(d-1)}^0}$\;
\label{line:start}
$a \gets$ arbitrary point from intersection of all $D$-optimal segment considered in  step 4.\;
\BlankLine
\tcp{Recover the end endpoint from the end backpair}
Find the $D$-optimal segment for all end backpair $\bigl(\pi_k(n), \pi_l(j)\bigr)$ w.r.t $L_{x_1^0, x_2^0, \ldots, x_{2(d-1)}^0}$\;
\label{line:end}
$b \gets$ arbitrary point from intersection of all $D$-optimal segment considered in step 6.\;
\BlankLine
\Return the segment $\overline{ab} \in L$ and the value $D$\;
\end{algorithm}

\paragraph{Time complexity:}
In \cref{sec_env} we have discussed the time to construct the upper envelope $\mathcal{U}$. Step 1 of the algorithm is about the construction of the upper envelope. The time for executing other steps - finding the minimum value of the upper envelope or identifying the pair contributing to the envelope at some point- is dominated by the construction time of the upper envelope. So, the algorithm will produce the $(1,2)$-center in $O\bigl((n^2r+nr^2)^{2+\epsilon}\bigr)$ for any $\epsilon>0$ for curves in $\mathbb{R}^2$. And, in $O\bigl((n^2r+nr^2)^{2(d-1)+\epsilon}\bigr)$ expected time, we can construct the upper envelope for curves in dimension $d$.

Further, if the $(1,2)$-center is restricted to be horizontal and input curves lie on a plane, it can be found in $O(n^2r+nr^2)$ time as discussed in \cref{sec_env}.

\subsection{Correctness}
Before we prove the correctness of the algorithm, we will state a modified version of \cref{fact_backpair_equality1}.

\begin{observation}
 \label{fact_backpair_equality}    Consider two backpair points $(p,q)$ for a line $L$ and optimal point for this backpair be $r$. If $r=_{ab}p^*$, $|p-r| \geq |q-r|$; If $r=_{ab}q^*$, $|q-r| \geq |p-r|$; otherwise $|p-r| = |q-r|$. Also note: if it is a start(end)-backpair with start point $p$, the output of the function is $|p-r|$.
\end{observation}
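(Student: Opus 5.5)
We need to prove Observation \ref{fact_backpair_equality}. Let $(p,q)$ be a backpair, $r$ its optimal point, and $f(t)=\max(\|p-t\|,\|q-t\|)$ for $t$ on $L$. The plan is to reduce everything to the one-dimensional behaviour of the two distance functions $g_p(t)=\|p-t\|$ and $g_q(t)=\|q-t\|$ restricted to $L$.

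First, by \cref{fac_on_line_convex}, $g_p$ is non-increasing up to $p^*$ and non-decreasing after it, and $g_q$ behaves the same way around $q^*$. For both $\mathbb{L}_2$ and $\mathbb{L}_\infty$ these functions are convex along $L$, being norms of affine maps. Since $(p,q)$ is a backpair we have $q^*<_L p^*$, and \cref{fac_opt_midl} places $r$ in $[q^*,p^*]$. On this interval $g_p$ is non-increasing and $g_q$ is non-decreasing.

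Next we split into cases according to the position of $r$.

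\begin{itemize}
\item Suppose $r=p^*$. If $\|q-r\|>\|p-r\|$, then moving $r$ slightly toward $q^*$ strictly decreases $g_q$. Because $p^*$ minimizes $g_p$, the value $g_p$ near $p^*$ stays close to $\|p-p^*\|<\|q-r\|$. The maximum $f$ therefore drops, contradicting optimality, so $\|p-r\|\ge\|q-r\|$.
\item The case $r=q^*$ is symmetric.
\item Suppose $q^*<_L r<_L p^*$ strictly, and suppose $\|p-r\|\neq\|q-r\|$, say $\|p-r\|>\|q-r\|$. By continuity, moving $r$ a little toward $p^*$ keeps $g_q$ below $g_p$. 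This gives the required contradiction provided $g_p$ strictly decreases in that direction.
\end{itemize}

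The main obstacle is this last strictness claim, because under $\mathbb{L}_\infty$ (or for an axis-parallel $L$) $g_p$ can be flat on an interval before $p^*$. The paper handles this by letting $p^*$ be a set and comparing starting points. I would argue that by convexity, a flat stretch of $g_p$ lies at its minimum value, so $r$ would then belong to the nearest set $p^*$, which is the boundary case already treated. Alternatively, the optimal set can be taken to be the closed optimal set from the earlier footnote, with $r$ chosen at an endpoint of it.

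For the start(end)-backpair remark, I would simply note that the defining formula of $B^s$ (resp.\ $B^e$) evaluates the distance from the start (end) vertex $p$. Combining the definition with the three cases above, the function's value is $\|p-r\|$.
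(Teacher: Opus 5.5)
The paper states this observation without proof, and your perturbation argument is the natural justification. It is complete whenever nearest points are unique. That is always the case under $\mathbb{L}_2$, and under $\mathbb{L}_\infty$ for lines in general position: then $g_p,g_q$ are convex with unique minimizers, hence strictly monotone on each side, and $r$ is unique.

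The gap is in the set-valued $\mathbb{L}_\infty$ case, and it is not where you located it. Moving $r$ toward $p^*$ is harmless. Under the paper's start-point convention, $r<_L p^*$ means $r$ precedes the first point $s_p$ of the nearest set, and a convex function is strictly decreasing before its minimum set. The step that fails is moving $r$ toward $q^*$. You use it in the case $r=p^*$, and you dismiss it as symmetric in the middle case. But $q^*$ is compared through its first point $s_q$, while $r$ may lie inside the interval $q^*$. There $g_q$ is flat, so $f$ need not drop and optimality gives no contradiction.

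Indeed the statement is false for an arbitrary optimal point. Take $L$ to be the $x$-axis, $p=(10,1)$ and $q=(6,5)$. Then $p^*=[9,11]$, $q^*=[1,11]$, and the optimal set is $[5,11]$ with value $5$.
\begin{itemize}
\item At $r=7$, strictly between the first points $1$ and $9$, we get $\|p-r\|_\infty=3\neq 5=\|q-r\|_\infty$.
\item At the far endpoint $r=11\in p^*$, we get $\|p-r\|_\infty=1<5=\|q-r\|_\infty$.
\end{itemize}
So your alternative of ``taking $r$ at an endpoint'' is not enough as stated.

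You must take $r$ to be the \emph{first} point of the optimal set along $L$; this is also what \cref{fac_opt_midl} needs in this case. With that choice the repair is short. Suppose $\|q-r\|>\|p-r\|$ with $r>_L s_q$. Moving $r$ slightly backward keeps $g_p$ below $f(r)$ by continuity. It does not increase $g_q$, by \cref{fac_on_line_convex}. Hence the new point is again optimal, contradicting the choice of $r$. Note that the contradiction comes from minimality of $r$ within the optimal set, not from optimality of the value.

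For the start/end-backpair remark, the displayed formula is $\min\bigl(f(r),\|p-q^*\|\bigr)$, not literally $\|p-r\|$, so you need to check that the two agree.
\begin{itemize}
\item If $r=q^*$, the minimum is $\|p-q^*\|=\|p-r\|\le\|q-r\|=f(r)$.
\item Otherwise $f(r)=\|p-r\|\le\|p-q^*\|$ by \cref{fac_on_line_convex}, since $q^*\le_L r\le_L p^*$.
\end{itemize}
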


\begin{lemma}
\label{lm_nonemptyD}
    The intersection of the $D$-optimal segment of all start/end-backpair is non-empty.
\end{lemma}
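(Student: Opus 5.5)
We argue separately for start-backpairs and for end-backpairs, since Algorithm~\ref{alg:1-2-center} intersects the two families independently; the end case is symmetric to the start case with the direction of $L$ reversed. Fix $L=L_{x_1^0,\ldots,x_{2(d-1)}^0}$ and $D=\mathcal{U}(x_1^0,\ldots)$. Every start-backpair distance is at most $\mathcal{U}_s\le D$, so each $D$-optimal segment $\overline{xy}$ is well defined. By the observation following its definition, it is a nonempty connected interval of $L$.

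Since $L$ is one-dimensional, the intervals have a common point as soon as they pairwise intersect (Helly's theorem in dimension one). So I reduce to showing that any two $D$-optimal segments $\overline{x_1y_1}$ and $\overline{x_2y_2}$ of start-backpairs intersect. Suppose not, say $y_1<_L x_2$. The left end $x_2$ comes from the start vertex $s_2=\pi_k(0)$ of the second pair: every point $t<_L x_2$ has $\|s_2-t\|>D$. The right end $y_1$ comes from the non-start vertex $v_1=\pi_l(i)$ of the first pair: every point $t>_L y_1$ has $\|v_1-t\|>D$. Hence no point of $L$ is within $D$ of both $s_2$ and $v_1$. In particular $s_2^*>_L v_1^*$, because otherwise the point $s_2^*$ (or $v_1^*$) would be within distance $\dhm{}\le D$ of both.

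Therefore $(s_2,v_1)$ is itself a start-backpair with start vertex $s_2$. Its start-backpair distance is $\min\bigl(\min_p\max(\|s_2-p\|,\|v_1-p\|),\ \|s_2-v_1^*\|\bigr)$. The first term exceeds $D$ by the argument above. For the second term, $v_1^*\le_L y_1<_L x_2$, so $\|s_2-v_1^*\|>D$ as well. This start-backpair therefore contributes a value greater than $D$ to $\mathcal{U}_s\le\mathcal{U}$, contradicting the choice of $D$. If both vertices of the first pair are start vertices, the same argument applies after choosing the roles by the tie-break rule in the definition.

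\textbf{Main obstacle.} The case where $s_2$ and $v_1$ are the same vertex, or both are start vertices, needs separate handling. If $s_2=v_1$, then the pair's own segment already gives a point within $D$, which yields an immediate contradiction. The other delicate point is the $\mathbb{L}_\infty$ setting, where $v_1^*$ can be a set rather than a single point; there I would compare using the starting point of that set, as the paper's convention prescribes.
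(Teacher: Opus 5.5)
Your proof is correct and follows the paper's argument: the paper's incremental maintenance of the running intersection $\overline{pq}$ is an unrolled form of your one-dimensional Helly reduction, and the key step is identical, namely pairing the start vertex of one $D$-optimal segment with the non-start vertex of a disjoint one to get a start-backpair whose distance exceeds $D$. You are slightly more careful than the paper in two places: you verify that the cross pair $(s_2,v_1)$ is a genuine start-backpair, and you check that the second term $\|s_2-v_1^*\|$ of the minimum in $B^s$ also exceeds $D$; the cleanest justification of the first point is $v_1^*\le_L y_1<_L x_2\le_L s_2^*$, using that all projection distances are at most $D$.
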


\begin{proof}
    Upon adding a single start-backpair, the statement is trivially true.
    Now consider $\overline{pq}$ to be the segment of intersection of the $D$-optimal segment after adding some start-backpairs. Certainly there exist a start vertex $\pi_k(0)$ such that $|\pi_k(0)-t|>D$ for all $t<_L p$ and a vertex $\pi_l(i)$ such that $|\pi_l(i)-t|>D$ for all $t>_L q$ where $t\in L$.
    
    On adding the next one, let the $D$-optimal segment for this start-backpair $\overline{mn}$ not intersect with $\overline{pq}$. 
    \begin{itemize}
        \item$m>_Lq:$  In this case, the start vertex of the newly added backpair $\pi_r(0)$ satisfies $|\pi_r(0)-t|>D$ for all $t<_L m$. Again there exist a vertex $\pi_l(i)$ such that $|\pi_l(i)-t|>D$ for all $t>_L q$. Consequently $B^s_{\pi_r(0),\pi_l(i)}(L)>D$ - a contradiction.
        \item $n<_Lp:$ In this case, the vertex other than the start vertex of the newly added start-backpair $\pi_s(j)$ satisfies $|\pi_s(j)-t|>D$ for all $t>_L n$. Again there exist a start point $\pi_k(0)$ such that $|\pi_k(0)-t|>D$ for all $t<_L p$. Consequently $B^s_{\pi_k(0),\pi_s(j)}(L)>D$ - a contradiction.
    \end{itemize}

So, $\overline{mn}$ and $\overline{pq}$ must intersect. Analogously, we can prove this for end backpairs.
\end{proof}

\begin{lemma}
\label{lm_left1}
For all vertex $p\in \{\pi_l(i): \pi_l(i)<_L a\}\cup \{\pi_k(0)\: \forall\: k\}$,  $|p-a|\le D$.    
\end{lemma}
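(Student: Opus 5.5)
We need to show that every start vertex, and every vertex whose projection lies before $a$ (we read $\pi_l(i)<_L a$ as $\pi_l(i)^*<_L a$), is within distance $D$ of $a$. Here $a$ is any point in the intersection of the $D$-optimal segments of all start-backpairs, which is non-empty by \cref{lm_nonemptyD}. We also use that $D=\min\mathcal{U}\ge \mathcal{U}_g,\mathcal{U}_s$ at $L$. The argument splits into two cases.

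\textbf{Start vertices.} Fix $p=\pi_k(0)$.
\begin{itemize}
\item If $p^*\le_L a$, pick any vertex $v$ with $v^*\ge_L a$. If none exists, take the vertex with the last projection; it lies at or before $a$. Then $(p,v)$ is not a start-backpair. We use only $\mathcal{U}_g\ge \dhm{\pi_k,L}$, which gives $|p-p^*|\le D$. That alone does not bound $|p-a|$, so we argue differently. The point $a$ lies in some start-backpair's $D$-optimal segment, say that of $(\pi_m(0),\pi_l(i))$. That segment ends at a point $y$ with $a\le_L y\le_L \pi_m(0)^*$ or so. Since $p^*\le_L a\le_L$ (the start of that backpair), the pair $(\pi_m(0),p)$, or $(p,\cdot)$ when $p$ is itself later, is a start-backpair. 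Its $D$-optimal segment contains $a$, because $a$ lies in the intersection of all of them. In particular $a\le_L$ the right end of that segment, which is where the non-start vertex $p$ is at distance $D$. Since $p^*\le_L a$, \cref{fac_on_line_convex} gives $|p-a|\le D$.
\item If $p^*>_L a$: when some vertex $\pi_l(i)$ has $\pi_l(i)^*<_L p^*$, the pair $(p,\pi_l(i))$ is a start-backpair with start vertex $p$. By definition its $D$-optimal segment $\overline{xy}$ satisfies $|p-x|=D$ and $|p-t|>D$ only for $t<_L x$. Since $x\le_L a\le_L y\le_L p^*$, \cref{fac_on_line_convex} gives $|p-a|\le D$. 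If no vertex projects before $p^*$, this cannot happen while $a$ exists, because $a$ lies at or after some projection. The degenerate case with no start-backpairs at all is handled by choosing $a$ as the first projection, which gives the bound via $\dhm{}$.
\end{itemize}

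\textbf{General vertices with $\pi_l(i)^*<_L a$.} Take $p=\pi_l(i)$. The start vertex $s$ whose projection is latest among start vertices pairs with $p$. If $s^*>_L p^*$, then $(s,p)$ is a start-backpair, and its $D$-optimal segment $\overline{xy}$ contains $a$. By definition $|p-y|=D$ and $|p-t|\le D$ for $p^*\le_L t\le_L y$. Since $p^*<_L a\le_L y$, we get $|p-a|\le D$. If instead $s^*\le_L p^*<_L a$, then by the first case $|s-a|\le D$, but that does not directly help. In this situation we need to show $a$ is close to $p^*$. The plan is to argue that $a$ can be taken no later than the left end of every backpair segment, or that some start-backpair's $D$-segment forces $a$ leftward. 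Failing that, we fall back on the algorithm's choice within the intersection, using that $a$ is the optimal point of the contributing start-backpair, hence lies at or before the latest start projection.

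\textbf{Main obstacle.} This last subcase is the hardest. The intended fix is to show $a\le_L\max_k \pi_k(0)^*$: every start-backpair's $D$-optimal segment ends at or before its start vertex's projection, by \cref{fac_opt_midl}. Then $p^*<_L a\le_L s^*$ contradicts $s^*\le_L p^*$, so this subcase is vacuous.
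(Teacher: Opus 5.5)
Your two ``good'' cases are exactly the paper's argument. A vertex $v$ with $v^*<_L a$ that is the \emph{non-start} member of some start-backpair gets $|v-a|\le|v-y|=D$ from the right end $y$ of that pair's $D$-optimal segment. A start vertex $\pi_k(0)$ with $\pi_k(0)^*>_L a$ that is the \emph{start} member of some start-backpair gets $|\pi_k(0)-a|\le|\pi_k(0)-x|=D$ from the left end $x$. The gap is in the leftover subcases, and your fixes for them rest on false claims:
\begin{itemize}
\item the latest-projecting start vertex when its projection is before $a$;
\item non-start vertices projecting at or after every start vertex but before $a$;
\item start vertices projecting after $a$ with nothing projecting before them.
\end{itemize}
\cref{fac_opt_midl} only places the optimal point $q$ between the two projections. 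The right end $y$ of the $D$-optimal segment of $(\pi_k(0),\pi_l(i))$ is the last point of $L$ within distance $D$ of $\pi_l(i)$. It can lie well past $\pi_k(0)^*$, for instance when $\pi_l(i)$ is near $L$ and $\pi_k(0)$ is at distance $D$ from $L$. Hence neither ``$y\le_L\pi_m(0)^*$'' in your first bullet nor $a\le_L\max_k\pi_k(0)^*$ in your last paragraph is justified. In the branch ``$(p,\cdot)$ when $p$ is later'', the right end of the segment is governed by the other vertex, not by $p$. Symmetrically, ``$a$ lies at or after some projection'' fails, because a left end $x$ can precede every projection.

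These subcases cannot be rescued: the statement is false for an arbitrary $a$ in the intersection, as Step 5 of the algorithm allows. Take $\pi_1=\langle(0,1),(200,0)\rangle$ and $\pi_2=\langle(-0.1,0),(100,1),(100,-1),(200,1),(200,-1),(200,0)\rangle$. The vertices $(100,\pm1),(200,\pm1)$ force $\mathcal{U}\ge1$ on every line, with equality only on the $x$-axis. On the $x$-axis directed left to right there are no backpairs and no end-backpairs. The only start-backpair is $\bigl((0,1),(-0.1,0)\bigr)$, with distance $\min(1,\sqrt{1.01})=1$. So $D=1$, the intersection of $D$-optimal segments is $[0,0.9]$, and $a=(0.9,0)$ is admissible. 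Yet $|\pi_1(0)-a|=\sqrt{1.81}>D$, and this $a$ lies after every start projection, which directly refutes $a\le_L\max_k\pi_k(0)^*$. Replacing $(-0.1,0)$ by $(0.1,0)$ gives the segment $[-0.9,0]$ and breaks the other subcase with $a=(-0.9,0)$.

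The paper's proof has the same hole: it silently assumes every relevant vertex occurs in the needed role. To get a true lemma you must restrict $a$. For example, take $a$ to be the leftmost point of $L$ within distance $D$ of all start vertices.
\begin{itemize}
\item These $D$-intervals pairwise intersect, because every start-backpair has $B^s\le D$ and every vertex is within $D$ of $L$.
\item If $v^*<_L a$, the start vertex $\pi_{k_0}(0)$ defining $a$ forms a start-backpair with $v$. The alternative $|\pi_{k_0}(0)-v^*|\le D$ would force $v^*\ge_L a$, so some point of $L$ is within $D$ of both. That puts $a$ no later than the last point of $L$ within $D$ of $v$, and convexity then gives $|v-a|\le D$.
\end{itemize}
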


\begin{proof}
If there does not exist any start-backpair $i.e.$ $\pi_k(0)^*$ coincides for all $k$ and for every non-start vertices satisfy $\pi_l(i)^*\ge_L a$ then , $|\pi_k(0)-\pi_k(0)^*|\le D$ as $\dhm{\pi_k,L}$ is contributing to upper envelope. Otherwise, all start vertices and any other vertex $\pi_l(i)$ such that $\pi_l(i)<_L a$ are part of some start backpair. Let $D$-optimal segment of one start-backpair where $\pi_l(i)$ is a part of be $\overline{xy}$. Certainly $\pi_l(i)^*\le_L a \le_L y$. So by \cref{fac_on_line_convex} $|\pi_l(i)-a| \le |\pi_l(i)-y|=D$. Now $\pi_k(0)^*>_L a$ be some start vertex and $D$-optimal segment for some start-backpair where $\pi_k(0)$ is the start vertex be  $\overline{x'y'}$. Now $x'\le_La<_L \pi_k(0)^* $. Consequently $D=|\pi_k(0)-x'|\ge |\pi_k(0)-a|$.
\end{proof}

\begin{lemma}
\label{lm_right1}
For all vertex $p\in \{\pi_l(i): \pi_l(i)>_L b\}\cup \{\pi_k(n)\: \forall\: k\}$,  $|p-b|\le D$.    
\end{lemma}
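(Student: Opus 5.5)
The plan is to mirror the proof of \cref{lm_left1}, with the roles of $a$ and the start vertices replaced by $b$ and the end vertices, and with every ordering along $L$ reversed. The argument splits according to whether any end-backpair exists with respect to $L=L_{x_1^0,\ldots,x_{2(d-1)}^0}$.

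\textbf{No end-backpair.} Suppose no end-backpair exists. Then all $\pi_k(n)^*$ coincide, and every non-end vertex satisfies $\pi_l(i)^*\le_L b$. In this case the only vertices to handle are the end vertices. For these, $|\pi_k(n)-\pi_k(n)^*|\le D$, because $\dhm{\pi_k,L}$ contributes to $\mathcal{U}_g\le\mathcal{U}$ and $D$ is the value of $\mathcal{U}$ at its minimizer. I would also need $b$ to coincide with this common projection, or at least that moving along $L$ does not break the bound. Following the paper's convention, I take $b$ to be chosen at that projection in this degenerate case.

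\textbf{End-backpairs exist.} Otherwise, every end vertex and every vertex $\pi_l(i)$ with $\pi_l(i)^*>_L b$ belongs to some end-backpair. The reason is that $b$ lies in the intersection of the $D$-optimal segments, which is nonempty by \cref{lm_nonemptyD}, and each such segment sits between the projections of its pair by \cref{fac_opt_midl}. I then split into two cases.

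\textbf{Case 1: a non-end vertex $\pi_l(i)$ with $\pi_l(i)^*>_L b$.} Choose an end-backpair $(\pi_k(n),\pi_l(i))$ containing it, and let $\overline{xy}$ be its $D$-optimal segment. The point $\pi_l(i)$ plays the role of the later vertex, so $x$ is its extreme point with $|\pi_l(i)-x|=D$ and $x\le_L b$. Since $b\in\overline{xy}$, we have $x\le_L b<_L\pi_l(i)^*$. \Cref{fac_on_line_convex} then gives $|\pi_l(i)-b|\le|\pi_l(i)-x|=D$.

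\textbf{Case 2: an end vertex $\pi_k(n)$.} If $\pi_k(n)^*\le_L b$, take an end-backpair with end vertex $\pi_k(n)$ and its $D$-optimal segment $\overline{x'y'}$. Then $\pi_k(n)^*\le_L b\le_L y'$, so $|\pi_k(n)-b|\le|\pi_k(n)-y'|=D$, again by \cref{fac_on_line_convex}. If instead $\pi_k(n)^*>_L b$, then $\pi_k(n)$ appears as the other vertex of an end-backpair, whose end vertex is the one with the earlier projection, and Case 1 applies verbatim.

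\textbf{Main obstacle.} The hardest step is justifying that each relevant vertex actually lies in an end-backpair whose $D$-optimal segment has the right endpoint pinned at distance exactly $D$. This requires $D$ to be at least every end-backpair distance, which follows from $\mathcal{U}_e\le\mathcal{U}=D$ at the minimizer. It also requires the orientation conventions of the $s$-optimal segment definition, where $x$ is tied to the earlier vertex and $y$ to the later one, to be applied consistently when the end vertex is the earlier-projected point. I would check this orientation carefully against the definition and \cref{fact_backpair_equality} before writing the cases.
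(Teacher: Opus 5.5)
Your proposal follows the route the paper intends (its proof of this lemma is just ``mirror \cref{lm_left1}''). However, the step you single out as the main obstacle is a genuine gap, and your justification for it does not hold. \Cref{fac_opt_midl} places only the optimal point $q$ between the two projections. The $D$-optimal segment $\overline{xy}$ of an end-backpair runs from the left end of the later vertex's $D$-neighbourhood on $L$ to the right end of the end vertex's $D$-neighbourhood, and these ends can lie outside the interval spanned by the projections. Hence $b$ can lie strictly before every end-vertex projection.

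In that situation the earliest-projected end vertex $\pi_{k_0}(n)$ has $\pi_{k_0}(n)^*>_L b$ but is never the later member of an end-backpair, so ``Case 1 applies verbatim'' has nothing to apply to. The end-backpairs in which it is the end vertex only give $b<_L\pi_{k_0}(n)^*\le_L y'$. This bounds nothing, because $b$ and $y'$ lie on opposite sides of $\pi_{k_0}(n)^*$. Likewise, a non-end vertex with $b<_L\pi_l(i)^*\le_L\min_k\pi_k(n)^*$ lies in no end-backpair at all, so your Case 1 does not reach it either.

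This is not merely a missing justification. With $b$ an arbitrary point of the intersection, as \cref{alg:1-2-center} allows, the conclusion itself can fail. Work under $\mathbb{L}_2$, let $L$ be the $x$-axis directed towards $+x$, let $\pi_1$ have vertices $(10,0),(-10,0),(10,0)$, and let $\pi_2$ have vertices $(4,9),(5,9)$.

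The curve $\pi_1$ forces $\mathcal{U}\ge 10$ on every line. If the line is not parallel to the $y$-axis, one of $(\pi_1(0),\pi_1(1))$ and $(\pi_1(1),\pi_1(2))$ is a backpair, with backpair distance at least $10$. If it is parallel to the $y$-axis, the projection term of $\pi_1$ is at least $10$. A direct check shows every distance function is at most $10$ on $L$, so $D=10$ and $L$ is a minimiser.

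The only end-backpairs are $(\pi_2(1),\pi_1(0))$ and $(\pi_2(1),\pi_1(2))$. In your orientation both have $D$-optimal segment from $(0,0)$ to $(5+\sqrt{19},0)$, so $b=(0,0)$ is admissible. Then $\pi_2(0)$ projects after $b$ yet lies in no end-backpair, and $\pi_2(1)$ lies in end-backpairs only as the end vertex. Moreover $|\pi_2(1)-b|=\sqrt{106}>10=D$, so the statement fails.

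The paper's omitted symmetric proof rests on the same unproved assertion made in \cref{lm_left1} (``all start vertices \ldots\ are part of some start backpair''). So you have reproduced the paper's argument faithfully, gap included. A correct version needs an additional constraint on how $b$ is chosen, not just a better argument.
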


\begin{proof}
    The proof is symmetric to that of \cref{lm_left1}, hence
    it is omitted.
\end{proof}

\begin{lemma}
\label{necessary}
Any line segment in the plane has \frd greater than or equal to $D$, the minimum value of the upper envelope of $\mathcal{U}$, with at least one of the curves.
\end{lemma}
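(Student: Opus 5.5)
Fix an arbitrary segment $\ell(a,b)$ and let $L$ be its supporting line, with parameters $(x_1,\ldots,x_{2(d-1)})$. Since $D=\min\mathcal{U}\le \mathcal{U}(x_1,\ldots,x_{2(d-1)})$, it suffices to show
\[
\max_i \delta_F(\pi_i,\ell(a,b)) \ge \mathcal{U}(x_1,\ldots,x_{2(d-1)}) = \max(\mathcal{U}_g,\mathcal{U}_s,\mathcal{U}_e).
\]
I will bound each of the three envelopes separately by $\max_i \delta_F(\pi_i,\ell(a,b))$, using \cref{thm_frd} for every curve.

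\textbf{The envelope $\mathcal{U}_g$.} For each $i$, \cref{thm_frd} (or just \cref{lm_berg_nec}) gives $\delta_F(\pi_i,\ell(a,b))\ge B(\pi_i,L)$. It also gives $\delta_F(\pi_i,\ell(a,b))\ge \max(\dhl{\pi_i,\ell(a,b)},\dhr{\pi_i,\ell(a,b)},\dhm{\pi_i,L})$. Here $\mathcal{U}_g$ uses the projection distance $\|p-p^*(L)\|$ over \emph{all} vertices, not only those projecting into $[a,b]$. So I must check that for a vertex with $p^*<_L a$ we have $\|p-a\|\ge \|p-p^*\|$, which holds by the definition of $p^*$ as nearest point. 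The case $p^*>_L b$ is symmetric. Hence every function in $\mathcal{U}_g$ is at most $\max_i\delta_F(\pi_i,\ell(a,b))$.

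\textbf{The envelope $\mathcal{U}_s$ (and symmetrically $\mathcal{U}_e$).} This is the main step. Take a start-backpair $(\pi_k(0),\pi_l(i))$ with $\pi_l(i)^*<_L\pi_k(0)^*$, and let $\delta=\max_j\delta_F(\pi_j,\ell(a,b))$. The key observation is that the start vertex of every curve is matched to $a$, so $\|\pi_k(0)-a\|\le\delta$. The vertex $\pi_l(i)$ is matched to some point $q\in[a,b]$ with $\|\pi_l(i)-q\|\le\delta$, and $q\ge_L a$. I will split on the position of $a$:
\begin{itemize}
\item If $a\ge_L \pi_l(i)^*$, then by \cref{fac_on_line_convex} $\|\pi_l(i)-a\|\le\|\pi_l(i)-q\|\le \delta$. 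Hence $\max(\|\pi_k(0)-a\|,\|\pi_l(i)-a\|)\le\delta$, and the first term of the $\min$ defining $B^s$ is at most $\delta$.
\item If $a<_L\pi_l(i)^*<_L\pi_k(0)^*$, then \cref{fac_on_line_convex} applied to $\pi_k(0)$ gives $\|\pi_k(0)-\pi_l(i)^*\|\le\|\pi_k(0)-a\|\le\delta$. Then the second term $|\pi_k(0)-\pi_l(i)^*|$ of the $\min$ is at most $\delta$.
\end{itemize}
In both cases $B^s_{(\pi_k(0),\pi_l(i))}(L)\le\delta$, so $\mathcal{U}_s\le\delta$. The argument for $\mathcal{U}_e$ is symmetric, with $b$ and the end vertices.

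\textbf{Expected obstacle.} The delicate point is the $\mathbb{L}_\infty$ case with an axis-parallel $L$, where $p^*$ is an interval. The comparisons in the case split above must then use the convention "where $p^*$ starts", and \cref{fac_on_line_convex} must be applied with non-strict inequalities. I would verify that the monotonicity still holds on each side of the interval. With that checked, combining the three bounds yields $\max_i\delta_F(\pi_i,\ell(a,b))\ge\mathcal{U}(L)\ge D$.
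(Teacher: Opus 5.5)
Your proof is correct. Its global structure matches the paper's: fix the directed supporting line $L$ of the segment, use $D\le\mathcal{U}(L)$, and dominate every function in the envelope at $L$ by the Fréchet distance of some input curve, handling $\mathcal{U}_g$ via Theorem~\ref{thm_frd}.

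You differ from the paper in the start-backpair step.
\begin{itemize}
\item The paper works with the optimal point $r$ of the active start-backpair and reads $B^s$ as $|\pi_k(0)-r|$. It splits on $r\ge_{L'}a'$, where $\pi_k(0)$ is already at least $D'$ from $a'$, versus $r<_{L'}a'$. In the second case it argues that if $\pi_k(0)$ is close to $a'$, then $\pi_l(i)$ must be far from $a'$, and hence, by monotonicity, far from the whole segment.
\item You instead split on the position of $a$ relative to $\pi_l(i)^*$. You then read off directly from the two-term $\min$ defining $B^s$ that one term is at most $\delta$: either the full backpair term, witnessed by the point $a$, or the term $|\pi_k(0)-\pi_l(i)^*|$.
\end{itemize}

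What your version buys:
\begin{itemize}
\item The argument is direct and never needs the equivalence between the $\min$-formula and the optimal-point description of $B^s$.
\item It avoids the paper's loose step ``otherwise $a'$ would be the optimal point.'' What that step really uses is that $B^s$ never exceeds the ordinary backpair distance.
\item You close a small gap the paper passes over. The projection term in Theorem~\ref{thm_frd} only ranges over vertices projecting into $[a,b]$, whereas $\mathcal{U}_g$ uses all vertices. Your remark fixes this, as does simply noting that the distance to the segment is at least the distance to $L$.
\end{itemize}
The paper's version, in exchange, stays in the language of optimal points and $D$-optimal segments that the algorithm and the sufficiency lemma use.

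Your anticipated $\mathbb{L}_\infty$ obstacle dissolves. Since $t\mapsto\|p-t\|$ is convex along $L$, it is non-increasing up to the start of its minimizing interval, minimal on it, and non-decreasing beyond it. So both of your cases go through when set-valued projections are compared, and evaluated, at their starting points, as in the paper's convention.
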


\begin{proof}
Say, a line segment $a'b'$ on line $L'_{x_1',x_2',\ldots}$ has \frd less than $D$ (minimum value of upper envelope) with all the curves. Let $D' \ge D$ be the value of upper envelope at $(x_1',x_2',\ldots)$. Consider the particular distance function contributing to the upper envelope at $(x_1',x_2',\ldots)$. If it is under $\dhm{L'}$ or $B(L')$ and the particular distance function is $||\pi_p(i)-\pi_p(i)^*||$ or $B_{(\pi_p(i),\pi_p(j))}(L'_{x_1',x_2',\ldots})$, according to \cref{thm_frd},  $\delta_F(\pi_p,a'b') \geq D' \ge D$.

We will show that if it is due to $B^s_{(\pi_k(0),\pi_l(i))}(x_1',x_2', \ldots)$, we can claim $\max\bigl(\delta_F(\pi_k,a'b'),\\ \delta_F(\pi_l,a'b')\bigr) \geq D'$. Similarly if upper envelope at $(x_1',x_2',\ldots)$ is contributed by end backpair function  $B^e_{(\pi_k(n),\pi_l(j))}(L'_{x_1',x_2',\ldots})$, we can claim $\max\bigl(\delta_F(\pi_k,a'b'),\delta_F(\pi_l,a'b')\bigr) \geq D'$. Proof is presented for the case where the upper envelope is contributed by the start backpair function; the case of the end backpair function follows analogously.

Let the optimal point for the start backpair $\pi_k(0), \pi_l(i)$ contributing to $\mathcal{U}_s$ at $(x_1',x_2',\ldots)$ be $r$ on $L'_{x_1',x_2',\ldots}$. According to construction of start backpair function, $\pi_k(0)^*\ge_{L'} r \ge_{L'} \pi_l(i)^*$.

\begin{figure}[ht!]
  \centering
  \begin{subfigure}[b]{0.495\linewidth}
    \centering
    \includegraphics[width=\linewidth]{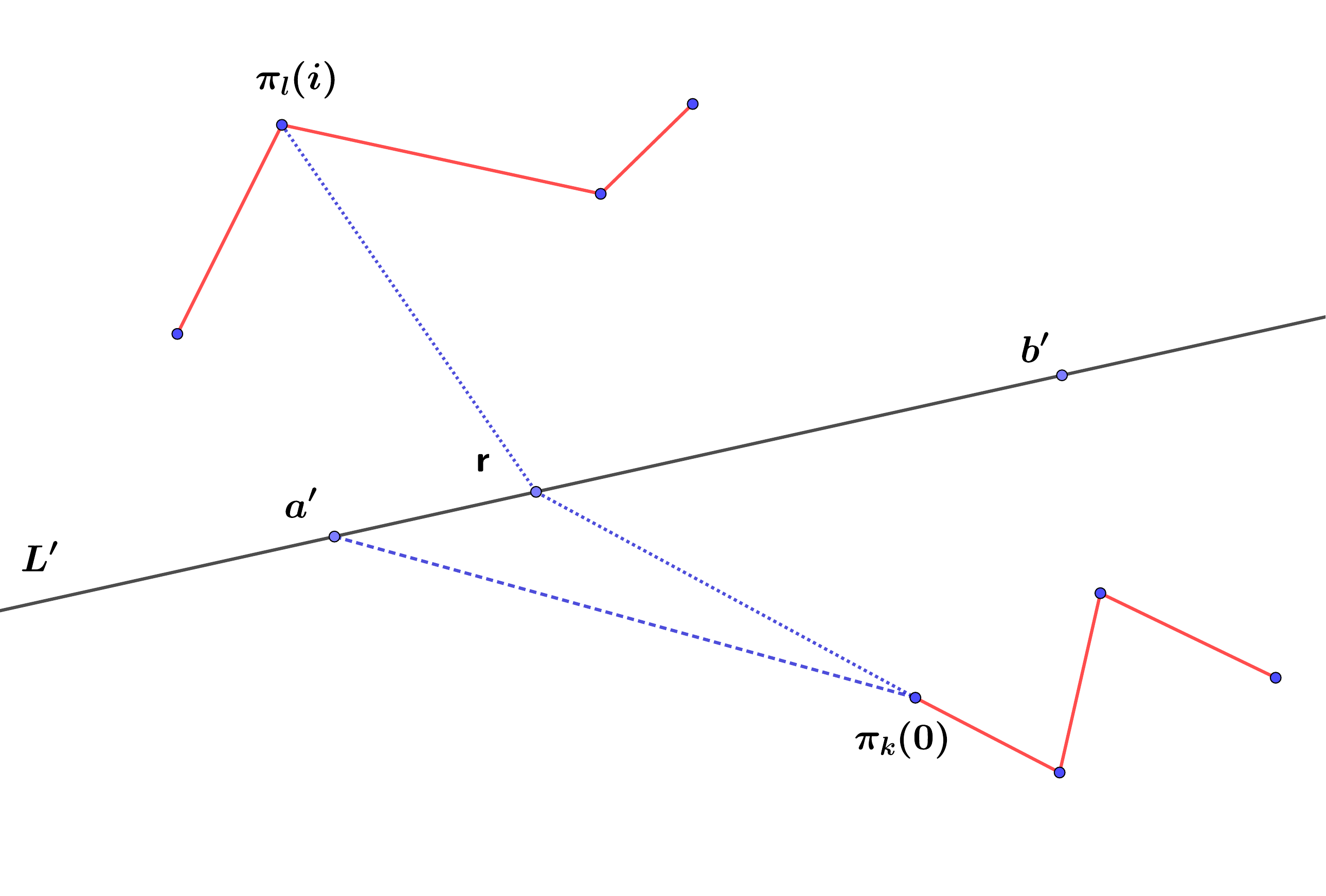}
    \caption{Case~1: $r \geq_{L'} a'$}
    \label{nec_start_c1}
  \end{subfigure}\hfill
  \begin{subfigure}[b]{0.495\linewidth}
    \centering
    \includegraphics[width=\linewidth]{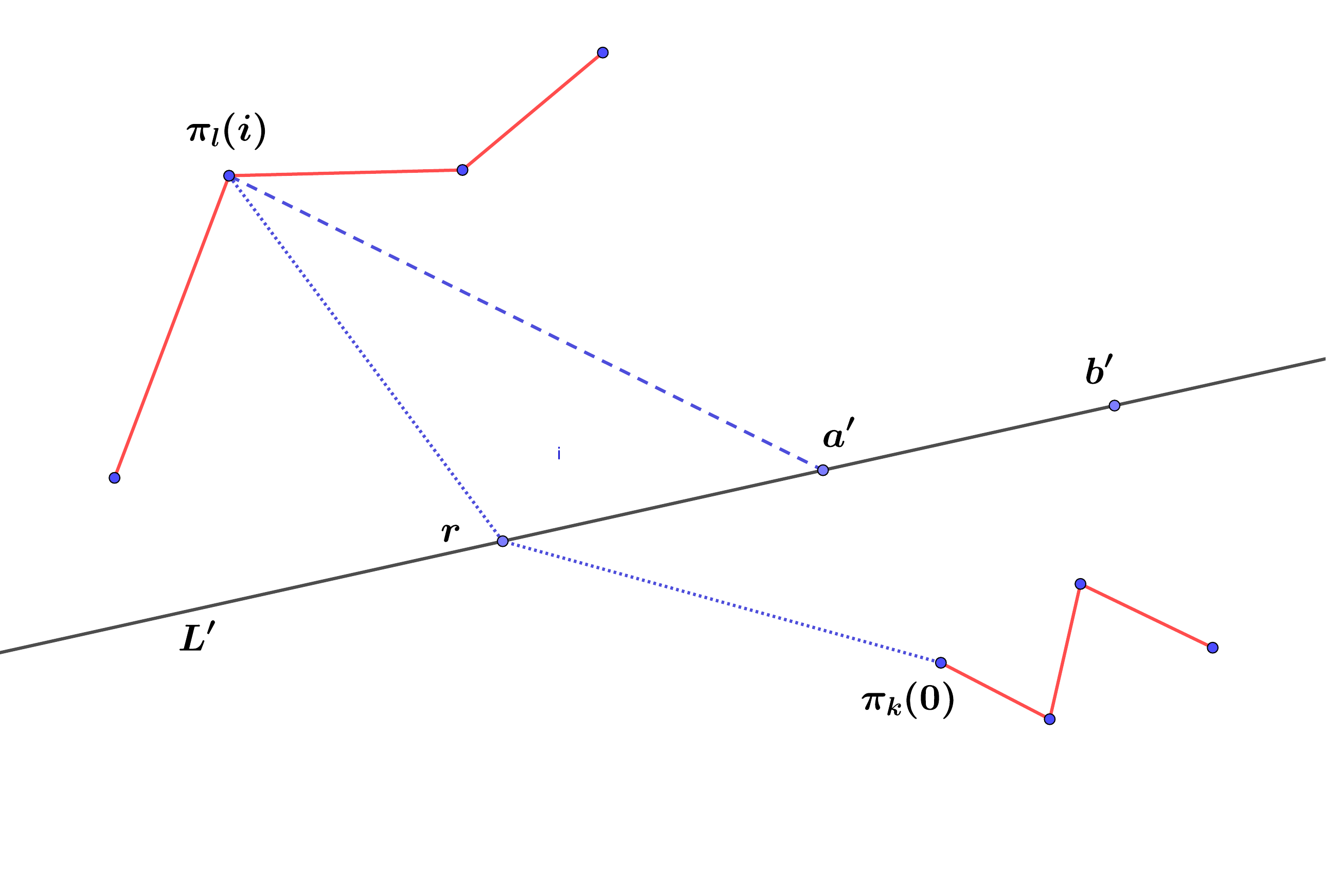}
    \caption{Case~2: $r <_{L'} a'$}
    \label{nec_start_c2}
  \end{subfigure}
  \caption{Illustration of \cref{necessary}.}
  \label{nec_start}
\end{figure}

\begin{description}[topsep=0pt, itemsep=0pt, parsep=0pt]
    \item[Case 1: $r\ge_{L'} a'$:]  As $a'\leq_{L'} r\leq_{L'} \pi_k(0)^*$, according to \cref{fac_on_line_convex}, $|\pi_k(0)-a'|\geq |\pi_k(0)-r|=D'\ge D$. So by \cref{thm_frd}, $\delta_F(\pi_k,\ell(a'b'))\geq D$. (Refer to figure \cref{nec_start_c1})
    \item[Case 2: $r\le_{L'} a'$:]  If $|\pi_k(0)-a'| \ge D$, then $\delta_F(\pi_k,\ell(a'b'))\geq D$. Suppose $|\pi_k(0)-a'| < D$ then $|\pi_l(i)-a'| \ge D$ otherwise $a'$ would be the optimal point of this start backpair. Again, $\pi_l(i)^*\le_{L'} r \le_{L'} a'$ according to construction. Every point $q \in a'b'$ satisfies the relation $q \ge_{L'} a'$. So according to \cref{fac_on_line_convex} for every such point $q$, $|\pi_l(i)-q|\ge |\pi_l(i)-a'| \ge D$. So by \cref{thm_frd} $\delta_F(\pi_l,a'b')\geq D$. (Refer to \cref{nec_start_c2})
\end{description}
\end{proof}

\begin{lemma}
\label{sufficient}
    The line segment $\overline{ab}$ determined by the algorithm has \frd less than or equal to $D$ with any of the curves.
\end{lemma}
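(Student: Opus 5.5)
The plan is to fix an arbitrary curve $\pi_k$ and show that every term in the formula of \cref{thm_frd} for $\delta_F(\pi_k,\overline{ab})$ is at most $D$. Here $a,b$ lie on $L=L_{x_1^0,\ldots,x_{2(d-1)}^0}$, and $D=\mathcal{U}(x_1^0,\ldots)$. By \cref{lm_nonemptyD}, $a$ and $b$ are well defined. I would first check that $a\le_L b$, so that $\overline{ab}=\ell(a,b)\in L$ is a legitimate segment; the terms are then handled in order.

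The endpoint terms and the two outer Hausdorff terms come directly from the earlier lemmas. \cref{lm_left1} gives $||\pi_k(0)-a||\le D$, and it also bounds $\dhl{\pi_k,\ell(a,b)}$ because every vertex with projection before $a$ lies within $D$ of $a$. Symmetrically, \cref{lm_right1} gives $||\pi_k(n)-b||\le D$ and $\dhr{\pi_k,\ell(a,b)}\le D$. The middle term $\dhm{\pi_k,L}$ and the backpair term $B(\pi_k,L)$ are components of $\mathcal{U}_g\le\mathcal{U}$, so they are at most $D$ at the minimizing parameters. One subtlety is that $\dhm{}$ is restricted to vertices whose projection lies in $[a,b]$, whereas $\mathcal{U}_g$ uses all projections, so the bound only gets easier. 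Likewise, $B(\pi_k,L)$ is defined on the whole line and does not depend on $a,b$, so it is bounded by $D$ as it stands. Applying \cref{thm_frd} then yields $\delta_F(\pi_k,\overline{ab})\le D$.

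The main obstacle is the ordering $a\le_L b$, together with the degenerate case in which there are no start-backpairs or no end-backpairs, so that step 5 or step 7 of the algorithm has nothing to intersect. For the degenerate case, I would choose $a$ as the earliest projection of a start vertex (all of them coincide, as in the proof of \cref{lm_left1}), and treat $b$ symmetrically.

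For the ordering, suppose $b<_L a$. Take the start-backpair interval containing $a$; its left end is forced by some start vertex $\pi_k(0)$, which is at distance greater than $D$ from every point $t<_L x$. Take also the end-backpair interval containing $b$; its right end is forced by some end vertex $\pi_l(n)$, which is at distance greater than $D$ from every $t>_L y'$. If these two intervals are disjoint, in the order $b$-interval then $a$-interval, I would argue that $(\pi_k(0),\pi_l(n))$, or a suitable pair of a start vertex and a vertex lying beyond it, forms a start-backpair with $B^s>D$. This contradicts the choice of $D$, just as in the two cases of \cref{lm_nonemptyD}.

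If the intervals overlap, I would instead use the freedom in choosing $a$ and $b$ within the intersections. Picking $a$ as the leftmost point of the start intersection and $b$ as the rightmost point of the end intersection, I would verify by the same convexity argument of \cref{fac_on_line_convex} that $a\le_L b$. I expect this compatibility between the start and end constraints to require the most care, since the earlier lemmas treat the two sides independently.
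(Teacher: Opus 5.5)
Your treatment of the middle and backpair terms is sound and simpler than the paper's. In \cref{thm_frd}, $B(\pi_k,L)$ is taken over the whole line; clipping to $[a,b]$ is absorbed by the two outer Hausdorff terms. So $B(\pi_k,L)\le\mathcal{U}_g\le D$ suffices, whereas the paper re-derives a witness point in $[a,b]$ for each backpair. You are also right that $a\le_L b$ needs proof, which the paper never supplies.

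The genuine gap is a step you share with the paper: citing \cref{lm_left1} (and \cref{lm_right1}) for your chosen endpoint. The intersection of $D$-optimal segments bounds $a$ from the left only through start vertices that act as the start vertex of some start-backpair. It bounds $a$ from the right only through their partners. A start vertex that never acts as a start vertex, or a vertex projecting before $a$ that lies in no start-backpair, is therefore controlled on one side only, and the proof of \cref{lm_left1} overlooks this.

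Concretely, work under $\mathbb{L}_2$ with $\pi_1=(0,5),(0,-5),(0,5)$ and $\pi_2=(1,0),(2,0)$. Let $L$ be the $x$-axis directed rightwards, identified with $\mathbb{R}$.
The curve $\pi_1$ forces $\mathcal{U}_g\ge 5$ on every line: it visits two points at distance $10$ in the order $p,q,p$. Any non-horizontal line therefore creates a backpair of distance $\ge 5$, and any horizontal line is at distance $\ge 5$ from one of them. On $L$ one checks $\mathcal{U}_g=\mathcal{U}_e=5$ and $\mathcal{U}_s=1$, so $L$ is a minimizer and $D=5$. The only start-backpairs are $(\pi_2(0),\pi_1(i))$, each with $D$-optimal segment $[-4,0]$. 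Your leftmost choice $a=-4$ gives $\|\pi_1(0)-a\|=\sqrt{41}>D$, hence $\delta_F(\pi_1,\overline{ab})>D$ for every $b$. The same example shows that an arbitrary point of the intersection does not work either.

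The repair uses exactly the contradiction from your disjoint case, applied to single vertices instead of to $D$-optimal segments of pairs.
For a vertex $v$, let $[x_v,y_v]$ be the set of points of $L$ within distance $D$ of $v$; it is nonempty and contains $v^*$ because $\mathcal{U}_g\le D$.
Suppose $u$ is a start vertex and $y_v<_L x_u$. Then $v^*\le_L y_v<_L x_u\le_L u^*$, so $(u,v)$ is a start-backpair. Every point of $L$ is farther than $D$ from $u$ or from $v$, and $\|u-v^*\|>D$, so $B^s_{(u,v)}(L)>D$, contradicting $\mathcal{U}_s\le D$.
Hence $x_u\le_L y_v$ for every start vertex $u$ and every vertex $v$. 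Symmetrically, via $\mathcal{U}_e$, $x_w\le_L y_{u'}$ for every end vertex $u'$ and every vertex $w$.

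Now take $a=\max_k x_{\pi_k(0)}$ and $b=\min_k y_{\pi_k(n)}$. Then $x_{\pi_k(0)}\le_L a\le_L y_v$ for all $k$ and all $v$. This gives $\|\pi_k(0)-a\|\le D$, and also $\dhl{\pi_m,\ell(a,b)}\le D$, since a vertex with $v^*<_L a$ satisfies $x_v\le_L v^*<_L a\le_L y_v$. The symmetric statements hold for $b$, and choosing $v$ to be an end vertex yields $a\le_L b$. Neither a separate degenerate case nor your disjoint/overlap split is needed.

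When start-backpairs exist, this $a$ does lie in the paper's intersection, but it is generally not its leftmost point; in the example $a=b=0$. With this choice, your appeal to \cref{thm_frd} completes the proof.
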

\begin{proof}
    Consider a certain curve $\pi_m$. It follows directly from \cref{lm_left1,lm_right1}, $|\pi_m(0) - a|,|\pi_m(n) - b|, \dhl{\pi_m,\lab}, \dhr{\pi_m,\lab}\leq D$. It remains to show ${B(L)}\leq D$.\\
    Let $\pi_m(i), \pi_m(j)$ be a backpair with $i<j$. We need to prove there exist a point $p$ on $L_{x_1^0, x_2^0,\ldots}$ such that $b\geq_L p\geq_L a$ which satisfies the condition $\max(|\pi_m(i)-p|, |\pi_m(j)-p|)\le D$. If we show there exists a point (on $L_{x_1^0, x_2^0,\ldots}$) $p_s$ such that $p_s\geq_L a$ and $p_t$ such that $p_t\leq_L b$, which satisfies the condition, then we are done because all points in between $p_s$ and $p_t$ satisfy the condition. We will show that such a point $p_s$ exists. The existence of $p_t$ can be proved analogously and is therefore skipped here.
    
    \begin{figure}[H]
        \centering
        \includegraphics[width=0.8\linewidth]{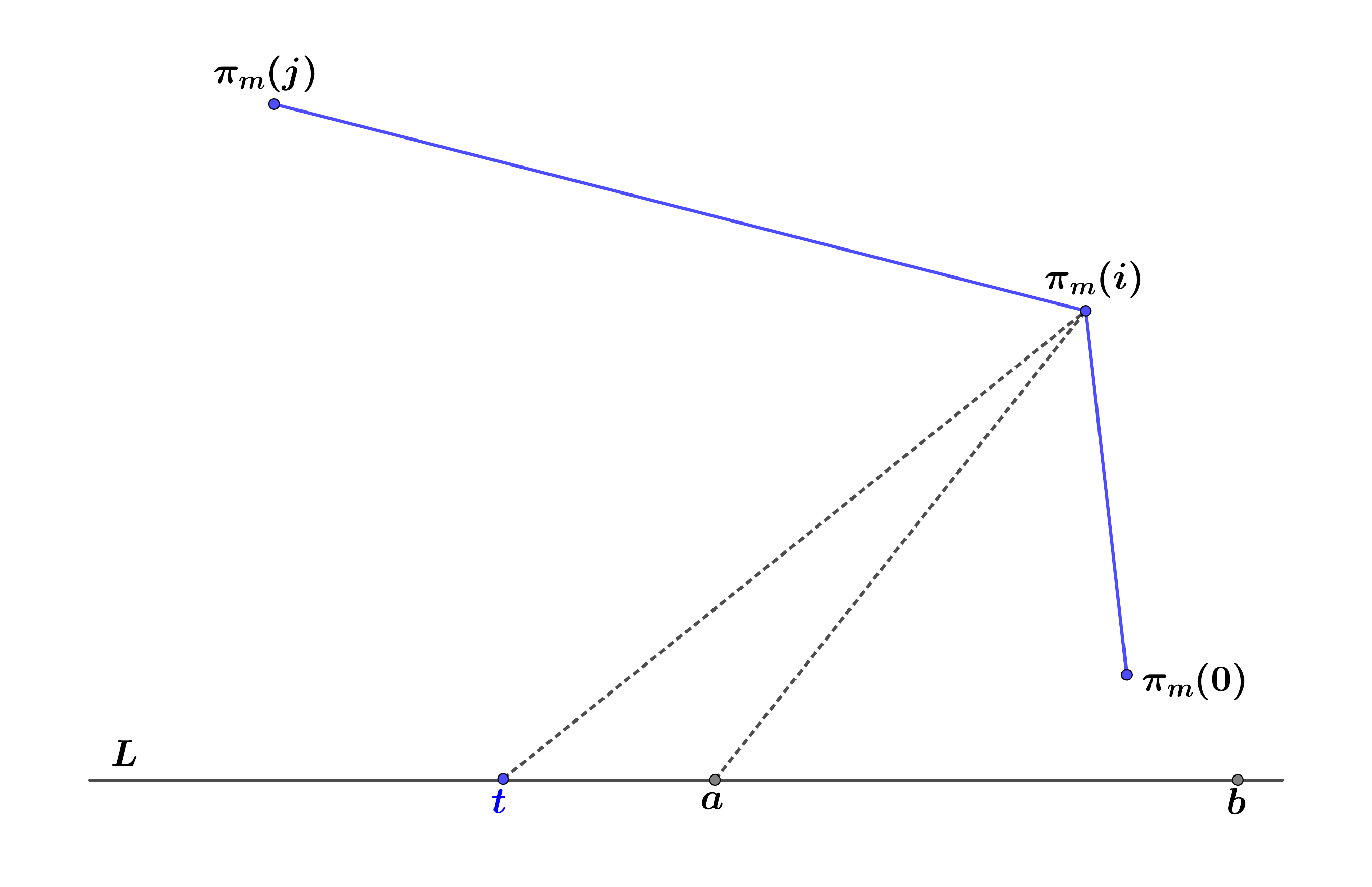}
        \caption{$B(\pi_m,L_{x_1^0, x_2^0,\ldots})\leq D$}
        \label{suf_back}
    \end{figure}
 
    If $a\leq_L \pi_m(j)^*$, the optimal point of this backpair $q$ satisfies $q>_La$ so we are done with $p_s:=q$. 
    Again if $a \geq_L \pi_m(i)^*$, both $|\pi_m(i)-a|, |\pi_m(j)-a| \le D$ as shown in \cref{lm_left1} and we have $p_s:=a$. So we will consider the case $\pi_m(i)^*\geq_L a\geq_L \pi_m(j)^*$. The backpair distance of the pair of points and the line $L$ is less than or equal to $D$. Let $t$ be the optimal point on $L$ for which $\max(|\pi_m(i)-t|, |\pi_m(j)-t|)$ is minimum (Refer to \cref{suf_back}). If $t \ge_L a$, we are done, since we can take $ p_s = t$. Let,  $t<_L a$. $|\pi_m(j)-a| \le D$ by \cref{lm_left1}. By \cref{fac_on_line_convex}, $|\pi_m(i)-a|\le |\pi_m(i)-t|\le D$. So, we have $p_s:=a$.
\end{proof}

From \cref{necessary} it follows that any $(1,2)$-center has at least distance $D$ from one of the curves, and \cref{sufficient} suffices that the $(1,2)$-center produced by the algorithm does not have \frd more than $D$, the minimum value of the upper envelope, with any of the curves. This leads to the following theorem- 
    \begin{theorem}
        The $(1,2)$-center algorithm produces an optimal $(1,2)$-center for curves. The runtime of the algorithm when $r,n$ are the number of curves and the maximum complexity of any curve, respectively-\\
        (a) $O\big((n^2r + nr^2)^{2(d-1)+\epsilon}\big)$ (expected) for curves in dimension $d$.\\
        (b) $O\big((n^2r + nr^2)^{2+\epsilon}\big)$ for curves in the plane.\\
        (c) $O(n^2r+nr^2)$ for the curves in the plane when the center is restricted to be horizontal.
    \end{theorem}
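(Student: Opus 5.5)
The proof has two parts: optimality of the returned segment, and the three running-time bounds. I will take them in that order.

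\textbf{Optimality.} Let $D$ be the minimum of $\mathcal{U}$, attained at $(x_1^0,\ldots,x_{2(d-1)}^0)$, and let $\overline{ab}$ be the output. The argument has three steps.
\begin{enumerate}
\item \textbf{The algorithm is well defined.} By \cref{lm_nonemptyD}, the $D$-optimal segments of all start-backpairs have a common point, and likewise for all end-backpairs. So steps 5 and 7 can pick $a$ and $b$.
\item \textbf{Lower bound.} By \cref{necessary}, every segment $\overline{a'b'}$, on any line $L'$, has Fr\'echet distance at least $D$ to some input curve. Hence every $(1,2)$-center has objective value at least $D$.
\item \textbf{Upper bound.} By \cref{sufficient}, $\delta_F(\pi_m,\overline{ab})\le D$ for every $m$. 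Here \cref{thm_frd} turns this into bounding each of the six terms of $\df{\pi_m}{\lab}$:
\begin{itemize}
\item $\|\pi_m(0)-a\|$, $\|\pi_m(n)-b\|$, $\dhl{\pi_m,\lab}$ and $\dhr{\pi_m,\lab}$ are bounded by \cref{lm_left1,lm_right1};
\item $\dhm{\pi_m,L}$ and $B(\pi_m,L)$ are at most $D$ because they are dominated by $\mathcal{U}_g\le\mathcal{U}$.
\end{itemize}
\end{enumerate}
Combining the two bounds, $\max_m\delta_F(\pi_m,\overline{ab})=D$ equals the optimum.

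\textbf{Degenerate cases.} I would check two situations separately.
\begin{itemize}
\item If $b<_L a$ could occur, the output would not be a valid segment. I would rule this out by showing that any start vertex and any end vertex are within $D$ of a common point. Alternatively, note that a segment with $b<_L a$ can be replaced by reversing the direction of $L$, which is also covered by the parameterization.
\item If there is no start-backpair (or no end-backpair), the choice of $a$ (or $b$) is vacuous. Then $a$ should be taken as the leftmost projection $\pi_k(0)^*$, as in the first case of \cref{lm_left1}.
\end{itemize}

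\textbf{Running time.}
\begin{itemize}
\item \textbf{General dimension.} The upper envelope $\mathcal{U}$ is formed by $O(n^2r+nr^2)$ partially defined $2(d-1)$-variate functions of constant algebraic degree. By \cite{env4d} it is constructed in $O\bigl((n^2r+nr^2)^{2(d-1)+\epsilon}\bigr)$ expected time. Its minimum is found by scanning its vertices, edges and faces, and each such feature has constant description, so this costs no more than the construction. Steps 4--7 take $O(nr^2)$ time: compute each $D$-optimal segment in $O(1)$, then intersect the intervals by taking the maximum of left endpoints and the minimum of right endpoints.
\item \textbf{Plane.} For $d=2$, \cite{envelope_bound} gives a deterministic bound of $O\bigl((n^2r+nr^2)^{2+\epsilon}\bigr)$.
\item \textbf{Horizontal center.} The functions are totally defined and unimodal convex in $y$. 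The prune-and-search of \cref{sec_env} with \cref{lem_decision} then finds $y^*$ in $O(n^2r+nr^2)$ time, and the recovery of $a,b$ is linear.
\end{itemize}

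\textbf{Main obstacle.} I expect the hard part to be justifying that the minimum of $\mathcal{U}$ is actually attained and that the envelope machinery applies. The functions are only partially defined, because the backpair condition changes with $L$, and the parameter space is not compact. I would first argue that $\mathcal{U}\to\infty$ as the line moves away from the input's bounding region, so the search is restricted to a compact region. I would then treat the domain boundaries of the partial functions as constant-degree algebraic surface patches, which the cited envelope results explicitly allow.
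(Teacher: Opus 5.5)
Your route is the paper's route: \cref{lm_nonemptyD}, \cref{necessary}, and \cref{sufficient} via \cref{thm_frd}, plus the envelope bounds. Bounding the projection and backpair terms directly by $\mathcal{U}_g\le D$ is fine, and simpler than the paper's case analysis in \cref{sufficient}, because the backpair term in \cref{thm_frd} minimizes over all of $L$.

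The step that fails is the one you hand to \cref{lm_left1,lm_right1}. When $a$ is an \emph{arbitrary} point of the intersection of the start-backpair $D$-optimal segments, $\|\pi_k(0)-a\|\le D$ need not hold. A start vertex whose projection is leftmost among \emph{all} vertices is never the start element of any start-backpair. So no $D$-optimal segment encodes the requirement that $a$ lie in its $D$-interval $\{q\in L:\|\pi_k(0)-q\|\le D\}$. The paper's proof of \cref{lm_left1} silently assumes otherwise, so citing it does not close the gap, and your ``no start-backpair'' patch covers only the extreme case. Concretely, take $\pi_1=\langle(0,1),(10,1)\rangle$, $\pi_2=\langle(0.5,0),(10,0)\rangle$, $\pi_3=\langle(0,-1),(10,-1)\rangle$. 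The minimum of $\mathcal{U}$ is $D=1$, attained only on the $x$-axis directed rightwards: the points $(0,\pm1),(10,\pm1)$ admit no other line within distance $1$, and the reversed direction creates a backpair of value $\sqrt{26}$. The only start-backpairs are $\bigl((0.5,0),(0,\pm1)\bigr)$, and both have $D$-optimal segment from $(-0.5,0)$ to $(0,0)$. There are no end-backpairs. The algorithm may choose $a=(-0.5,0)$, which gives $\|\pi_1(0)-a\|=\sqrt5/2>D$.

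The repair is to take $a$ as the leftmost point of $L$ within distance $D$ of \emph{every} start vertex, and $b$ as the rightmost point within distance $D$ of every end vertex. For each vertex $p$, write $[l_p,r_p]=\{q\in L:\|p-q\|\le D\}$; these intervals are nonempty since $\mathcal{U}_g\le D$.
\begin{itemize}
\item Let $s$ be a start vertex and $p$ a vertex with $p^*<_L s^*$. Then $(s,p)$ is a start-backpair, and its ordinary backpair value is at most $D$. Indeed, if the minimum in $B^s$ were realized by $\|s-p^*\|\le D$ while the backpair value exceeded $D$, then $\|p-p^*\|>D$, contradicting $\mathcal{U}_g\le D$. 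Hence $[l_s,r_s]$ and $[l_p,r_p]$ meet, so $l_s\le_L r_p$.
\item If instead $p^*\ge_L s^*$, then $l_s\le_L r_p$ holds trivially.
\item Choose $s$ with $a=l_s$. The two cases give $a\le_L r_p$ for every vertex $p$. This yields \cref{lm_left1} for every start vertex and every $p$ with $p^*<_L a$. Taking $p$ to be an end vertex gives $a\le_L b$.
\end{itemize}
This also settles your $b<_L a$ concern, where neither suggestion works. A common point near all start and end vertices does not order arbitrary choices of $a$ and $b$. Reversing $L$ changes the set of backpairs, so the envelope value on the reversed line says nothing about the reversed segment. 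The running-time part agrees with the paper.
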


\section{Approximation Results}
\Cref{alg:1-2-center} computes an optimal $(1,2)$-center. We will now introduce methods for computing the center using different approximation schemes.

\subsection{\texorpdfstring{$1+\epsilon$}{1+epsilon}- factor approximation}
\label{sec_approx_reduction}
The exact algorithm finds the parameter values for which the upper envelope $\mathcal{U}$ of the distance
functions attains the minimum value and then finds two endpoints of the center on that line. Now we replace each distance function by a polynomial function that
approximates it within a fixed factor. Thus, we get the upper envelope that approximates the actual upper envelope within a fixed factor. We will introduce an approximation scheme for curves in $\mathbb{R}^2$, with distance measured under the $\mathbb{L}_2$ norm.

\subsubsection{Algebraic expression of Distance Functions}
Before presenting the algebraic expressions for the distance functions, we introduce the necessary notations. Intersection point of perpendicular bisector of two points $(p_i,p_j)$ with a line $L$ is denoted as $(p_{ij}^*)$.
The following results are obtained by direct computation-

$(p_i^*.x,p_i^*.y):= (\frac{p_i.x+m(p_i.y-c)}{1+m^2},\frac{mp_i.x+m^2p_i.y+c}{1+m^2})$ and the projection distance is $\frac{|p_i.y-mp_i.x-c|}{\sqrt{1+m^2}}$.

$(p_{ij}^*.x,p_{ij}^*.y):=(\frac{p_i.y^2+p_i.x^2+p_j.y^2+p_j.x^2}{2(m\delta_y+\delta_x)}-\frac{c\delta_y}{m\delta_y+\delta_x},\frac{m(p_i.y^2+p_i.x^2+p_j.y^2+p_j.x^2)}{2(m\delta_y+\delta_x)}+\frac{c\delta_x}{m\delta_y+\delta_x})$. 

Where $\delta_x=p_i.x-p_j.x, \delta_y=p_i.y-p_j.y$.

\begin{equation}
    \dhm{p_i,L}=|p_i.y-m\cdot p_i.x-c|\cdot(1+m^2)^{-\frac{1}{2}}
\end{equation}

Say, $B^*_{p_i,p_j}(L_{m,c})=||p_i-p_{ij}^*||, \quad \widetilde{B}_{p_i,p_j}(L_{m,c})=||p_i-p_j^*||$. Computation shows:

\begin{equation}
\begin{cases}
   \widetilde{B}_{p_i,p_j}(L_{m,c})=(1+m^2)^{-\frac{1}{2}}\cdot\bigl(m^2(p_j.y-\Delta_{p_i})+m\cdot\Delta_{p_j}+c-m\cdot c -\Delta_{p_i}\bigr)^{\frac{1}{2}}\\
    B^*_{p_i,p_j}(L_{m,c})=\frac{1}{\sqrt{2}}\cdot (m\cdot\delta_y+\delta_x)^{-\frac{1}{2}}\cdot (m\cdot v_y+s.c-v_x)^{-\frac{1}{2}} \text{ where,}
\end{cases}
\end{equation}

\begin{equation}
\begin{cases}
    \Delta_{p_i}=p_i.x+p_i.y\\[4pt]
    s=2(\delta_x-\delta_y)\\[4pt]
    v_y=p_i.y^2+p_i.x^2+p_j.y^2+p_j.x^2-2\delta_y\cdot p_i.y -\delta_y\cdot p_i.x\\[4pt]
    v_x=p_i.y^2+p_i.x^2+p_j.y^2+p_j.x^2-2\delta_x\cdot p_i.x -\delta_x\cdot p_i.y\\
\end{cases}
\end{equation}

are constants depending on $p_i,p_j$.

\begin{equation}
B_{(p_i,p_j)}(L)=
    \begin{cases}
        \sqrt{(\frac{p_i.x-p_j.x}{2})^2+(\frac{p_i.y-p_j.y}{2})^2} & \text{if} \;
            m = -\frac{\delta_x}{\delta_y}\\[16pt]
        B^*_{(p_i,p_j)}(L)
        & \begin{substack}{ \text{if} \; (x^*-p_i^*.x)(x^*-p_j^*.x)\leq 0\\ \text{and, } \; m \neq -\frac{\delta_x}{\delta_y}} \end{substack}\\[20pt]
        \dhm{p_i,L} & \begin{substack}{
            \text{if}\;m \neq -\frac{\delta_x}{\delta_y} \; \text{ and,}\\ (x^*-p_i^*.x)(x^*-p_j^*.x)> 0 \text{ and,}\\
            |x^*-p_j^*.x|>|x^*-p_i^*.x|}
        \end{substack}\\[20pt]
        \dhm{p_j,L} & \begin{substack}{
            \text{if}\; m \neq -\frac{\delta_x}{\delta_y} \; \text{ and,}\\ (x^*-p_i^*.x)(x^*-p_j^*.x)> 0 \text{ and,}\\ |x^*-p_i^*.x|>|x^*-p_j^*.x|}
        \end{substack}
    \end{cases}
\end{equation}

\begin{equation}
B^s_{(p_0,p_i)}(L)=
    \begin{cases}
        \sqrt{(\frac{p_i.x-p_0.x}{2})^2+(\frac{p_i.y-p_0.y}{2})^2} & \text{if} \;
            m = -\frac{\delta_x}{\delta_y}\\[16pt]
        B^*_{(p_i,p_0)}(L)
        & \begin{substack}{ \text{if} \; (x^*-p_i^*.x)(x^*-p_0^*.x)\leq 0\\ \text{and, } \; m \neq -\frac{\delta_x}{\delta_y}} \end{substack}\\[20pt]
        \widetilde{B}_{(p_0,p_i)}(L) & \begin{substack}{
            \text{if}\;m \neq -\frac{\delta_x}{\delta_y} \; \text{ and,}\\ (x^*-p_i^*.x)(x^*-p_0^*.x)> 0 \text{ and,}\\
            |x^*-p_0^*.x|>|x^*-p_i^*.x|}
        \end{substack}\\[20pt]
        \dhm{p_0,L} & \begin{substack}{
            \text{if}\; m \neq -\frac{\delta_x}{\delta_y} \; \text{ and,}\\[12pt] (x^*-p_i^*.x)(x^*-p_0^*.x)> 0 \text{ and,}\\[12pt] |x^*-p_i^*.x|>|x^*-p_0^*.x|}
        \end{substack}
    \end{cases}
\end{equation}

\begin{equation}
B^e_{(p_n,p_i)}(L)=
    \begin{cases}
        \sqrt{(\frac{p_i.x-p_n.x}{2})^2+(\frac{p_i.y-p_n.y}{2})^2} & \text{if} \;
            m = -\frac{\delta_x}{\delta_y}\\[16pt]
        B^*_{(p_n,p_i)}(L)
        & \begin{substack}{ \text{if} \; (x^*-p_i^*.x)(x^*-p_n^*.x)\leq 0\\ \text{and, } \; m \neq -\frac{\delta_x}{\delta_y}} \end{substack}\\[20pt]
        \widetilde{B}_{(p_n,p_i)}(L) & \begin{substack}{
            \text{if}\;m \neq -\frac{\delta_x}{\delta_y} \; \text{ and,}\\ (x^*-p_i^*.x)(x^*-p_n^*.x)> 0 \text{ and,}\\
            |x^*-p_n^*.x|>|x^*-p_i^*.x|}
        \end{substack}\\[20pt]
        \dhm{p_n,L} & \begin{substack}{
            \text{if}\; m \neq -\frac{\delta_x}{\delta_y} \; \text{ and,}\\ (x^*-p_i^*.x)(x^*-p_n^*.x)> 0 \text{ and,}\\ |x^*-p_i^*.x|>|x^*-p_n^*.x|}
        \end{substack}
    \end{cases}
\end{equation}

\subsubsection{Polynomial Approximation}
\begin{observation}
   The set of lines of form $I_1:y=mx+c$ and $I_2:x=my+c$ where $|m|\le 1, c\in \mathbb{R}$, constitutes the set of all lines in the plane. 
\end{observation}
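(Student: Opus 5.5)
The plan is to take the two families of lines separately and show that together they cover every line. Let $L$ be any line in the plane. If $L$ is vertical, write it as $x=c$; this is $I_2$ with $m=0$. Otherwise $L$ can be written as $y=m'x+c'$ for a unique slope $m'\in\mathbb{R}$ and intercept $c'$, and I will split according to the size of $|m'|$.

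\textbf{Case $|m'|\le 1$.} Then $L$ is already of the form $I_1$ with $m=m'$ and $c=c'$.

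\textbf{Case $|m'|>1$.} Here $m'\neq 0$, so I will solve the equation for $x$:
\[
x=\tfrac{1}{m'}\,y-\tfrac{c'}{m'}.
\]
This is of the form $I_2$ with $m=1/m'$ and $c=-c'/m'$. The slope condition holds because $|1/m'|<1$.

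For the reverse inclusion, every equation $y=mx+c$ or $x=my+c$ with $|m|\le 1$ and $c\in\mathbb{R}$ is a non-degenerate linear equation in $x,y$. Hence each such equation describes a line, and the union of the two families is exactly the set of all lines in the plane.

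There is no real obstacle here; the statement is a direct case analysis. The only point needing care is to handle the vertical line and the $|m'|>1$ case through $I_2$, so that no line is missed. It is also worth noting, though not needed for the claim, that lines with $|m'|=1$ lie in both families. That overlap is harmless, and this observation is what lets the subsequent approximation keep the parameter $m$ in the compact range $[-1,1]$.
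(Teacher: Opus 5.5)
Your proof is correct and follows the same route as the paper: split on whether the slope has absolute value at most $1$, and in the steep case solve for $x$ to land in $I_2$ with $|m|\le 1$. The paper writes the line as $ax+by+c=0$ and divides by $b$, so your explicit handling of the vertical line closes the $b=0$ case that the paper's version leaves implicit; your check of the reverse inclusion is also something the paper omits.
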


\begin{proof}
    Consider an arbitrary line in the plane $L=ax+by+c=0$. It can be written as- $y=\frac{-a}{b}x-\frac{c}{b}$ or $x=\frac{-b}{a}y-\frac{c}{a}$. Now if $|\frac{a}{b}|<1$ it is a line within the set $I_1$, otherwise $|\frac{a}{b}|\ge 1$ and $L$ is a line of the set $I_2$. 
\end{proof}

Now, consider the smallest rectangle bounding all the vertices of the curves. Let the output of the $(1,2)-$ center algorithm $\lab \in L$. $L$ must pass through the bounding box; otherwise, one can bring it closer to the box and reduce the value of all distance functions, contradicting the optimality of $L$. It allows us to state-
\begin{observation}
    There exist constants $c_i\quad \text{for} \quad i\in\{1,2,\ldots 6\}$ such that the line $L$ containing the optimal $(1,2)$-Center can be presented by any of the following lines-
    $$y=mx+c\quad \text{where} \quad |m|\le1, c_1 \le c \le c_2$$
    $$x=my+c\quad \text{where} \quad |m|<1, c_3 \le c \le c_4$$
    $$y=x+c\quad \text{where} \quad  c_5 \le c \le c_6$$
\end{observation}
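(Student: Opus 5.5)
The plan has two parts. First, show that the optimal line $L$ meets a fixed bounded region determined by the input, namely the bounding box $R=[x_{\min},x_{\max}]\times[y_{\min},y_{\max}]$ of all vertices. Second, translate "meets $R$" into bounds on the intercept $c$ for each of the three families of lines in the statement.

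For the first part, argue by contradiction. Suppose the optimal center $\ell(a,b)\in L$ has $L\cap R=\emptyset$. Then $L$ and the convex set $R$ are separated. Under $\mathbb{L}_2$, translate $L$ by a small vector $v$ pointing towards $R$, orthogonal to the direction of $L$. This translation keeps every projection order $p^*>_L q^*$ unchanged, so the backpair and start/end-backpair structure, and hence the set of active functions, stays the same. Every vertex lies on the far side of $L$ in the direction of $v$, so each projection distance $\|p-p^*\|$ strictly decreases. Each point $q\in L$ moves to $q+v$. For the backpair and start/end-backpair functions, the optimal point and the endpoints $a,b$ move with the translation, and every vertex is strictly closer to the translated point. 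So every distance in \cref{thm_frd} does not increase, and $\max_i\delta_F(\pi_i,\ell(a+v,b+v))\le D$.

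Iterating this moves $L$ until it touches $R$, and an optimal center still exists. Thus we may assume, as the paper already states, that $L$ intersects $R$. For $\mathbb{L}_\infty$ the same works: a translation that decreases every coordinate gap towards $R$ does not increase any $\mathbb{L}_\infty$ distance to vertices.

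For the second part, use the preceding observation to write $L$ as either $y=mx+c$ with $|m|\le 1$ or $x=my+c$ with $|m|\le 1$.
\begin{itemize}
\item If $y=mx+c$ passes through a point $(x_0,y_0)\in R$, then $c=y_0-mx_0$. With $|m|\le 1$ this gives $c\in[y_{\min}-\max(|x_{\min}|,|x_{\max}|),\,y_{\max}+\max(|x_{\min}|,|x_{\max}|)]$, which defines $c_1,c_2$.
\item Symmetrically, $x=my+c$ gives $c_3,c_4$.
\item The boundary case $|m|=1$ written as $y=x+c$ gives $c_5=y_{\min}-x_{\max}$ and $c_6=y_{\max}-x_{\min}$.
\end{itemize}
I would split off $y=x+c$ and take $|m|<1$ in the second family, so that the families are disjoint and match the statement. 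All $c_i$ depend only on the input vertices, so they are constants for the approximation scheme.

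The main obstacle is making the translation argument rigorous. The difficulty is that the Fréchet distance to the new segment must not increase for the start/end-backpair terms. These involve the endpoints $a,b$, not only projection distances. The cleanest route is to avoid the functions altogether and use the direct coupling: map every parametrization $\beta$ of $\ell(a,b)$ to the same parametrization of $\ell(a+v,b+v)$. Pointwise, $\|P(\alpha(t))-(q+v)\|\le\|P(\alpha(t))-q\|$ for every $q\in L$, because the whole curve lies in the closed halfspace beyond $L+v$ when $v$ is small enough, which holds while $L+v$ still misses the interior of $R$. This gives the monotonicity directly from \cref{eq_frd_alt_gadou}.
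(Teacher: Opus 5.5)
Your proposal is correct and follows the paper's route: the paper also argues that the optimal line must meet the bounding box of all vertices, because otherwise moving it toward the box lowers every distance function, and it then reads off bounded intercept ranges for each slope family. Your direct coupling argument (every point of every curve lies beyond $L+v$, so $\|p-(q+v)\|\le\|p-q\|$ for all $q\in L$) is a more careful justification of that one-line step, and it is all you need because the observation is only used for planar curves under $\mathbb{L}_2$; your $\mathbb{L}_\infty$ aside is imprecise but not needed.
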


Since $m$ and $c$ range over bounded, closed intervals in each case, the
domain of each distance function $B^*(L)$, $\widetilde{B}(L)$, and
$\dhm{L}$ is compact. Let $f(L)$ be some function of the form $B^*(L)$,
$\widetilde{B}(L)$, or $\dhm{L}$. As each of these functions is continuous
on its respective domain, for any $\widetilde{\epsilon}>0$, there exists a
polynomial $p(m,c)$ of some finite degree $k$ such that
$\max_{(m,c)}
\left|f(L_{m,c})-p(m,c)\right|<\widetilde{\epsilon}$.
Let $\U{m,c}$ denote the upper envelope of the original distance functions
and let
$D^*=\min_{(m,c)} \U{m,c}>0$.
Since every distance function is approximated with an additive error at most
$\widetilde{\epsilon}$, the corresponding approximate upper envelope
$\widetilde{\mathcal{U}}(m,c)$ satisfies
$\left|\mathcal{U}(m,c)-\widetilde{\mathcal{U}}(m,c)\right|
\leq \widetilde{\epsilon}$
for every $(m,c)$. Let $(m^*,c^*)$ minimize $\mathcal{U}$ and let
$(\widetilde{m},\widetilde{c})$ minimize $\widetilde{\mathcal{U}}$. Then
\[
\begin{aligned}
\mathcal{U}(\widetilde{m},\widetilde{c})
\leq \widetilde{\mathcal{U}}(\widetilde{m},\widetilde{c})
+\widetilde{\epsilon}
\leq \widetilde{\mathcal{U}}(m^*,c^*)
+\widetilde{\epsilon} \leq \mathcal{U}(m^*,c^*)+2\widetilde{\epsilon} =D^*+2\widetilde{\epsilon}.
\end{aligned}
\]

Again, $D^*$ is bounded by a certain finite value, say the radius of the minimum enclosing circle of all the vertices. Thus, by choosing
$\widetilde{\epsilon}\leq\frac{\epsilon' D^*}{2}$,
we obtain
$\mathcal{U}(\widetilde{m},\widetilde{c})\leq (1+\epsilon')D^*$.
The case $D^*=0$ can be handled separately using the exact feasibility
conditions.

\subsubsection{Approximate \texorpdfstring{$(1,2)$}{(1,2)}-center}
Considering polynomial approximation of all $O(n^2r+nr^2)$ distance
functions, we construct the upper envelope of these approximated functions.
In \cite{Coreset} it was shown that the upper envelope of these functions
can be constructed with $(1+\epsilon_0)$-factor approximation in
$O(n^2r+nr^2+1/\epsilon_0^c)$ time for any $\epsilon_0>0$ and some
constant $c$. This leads us to state the following:
\begin{theorem}
    For curves in the plane, one can compute an $(1+\epsilon)$-factor approximation of
    $(1,2)$-center in
    $O(n^2r+nr^2+1/\epsilon^s)$ time for any $\epsilon>0$ and some
    constant $s$.
\end{theorem}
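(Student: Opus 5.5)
The plan is to reduce the approximate problem to the exact pipeline of \cref{alg:1-2-center}, replacing only the minimization of $\mathcal{U}$ by an approximate minimization. The reduction rests on two facts: \cref{necessary} and \cref{sufficient} show that the optimal value equals $D^*=\min\mathcal{U}$, and the endpoint-recovery steps work for any line on which the envelope value is known.

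\textbf{Step 1: restrict the domain.} Using the observation on the three families of lines with $|m|\le 1$ and $c$ in a bounded interval, I split the parameter domain into three compact boxes and treat each separately, returning the best of the three answers. On each box every distance function $B^*$, $\widetilde{B}$ and $\dhm{L}$ is continuous. By the polynomial approximation argument above, each can be replaced by a polynomial $p(m,c)$ of degree depending only on $\widetilde{\epsilon}$, with additive error at most $\widetilde{\epsilon}\le \epsilon' D^*/2$. The piecewise definitions of $B$, $B^s$, $B^e$ are also algebraic conditions of constant complexity. The result is $N=O(n^2r+nr^2)$ bivariate semi-algebraic functions of constant description complexity whose envelope $\widetilde{\mathcal{U}}$ satisfies $|\mathcal{U}-\widetilde{\mathcal{U}}|\le\widetilde{\epsilon}$. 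The chain of inequalities already displayed then gives $\mathcal{U}(\widetilde m,\widetilde c)\le(1+\epsilon')D^*$.

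\textbf{Step 2: approximate the envelope minimum.} Here I invoke the coreset result of \cite{Coreset}. It yields, in $O(N+1/\epsilon_0^{c})$ time, a point $(\hat m,\hat c)$ with $\widetilde{\mathcal{U}}(\hat m,\hat c)\le(1+\epsilon_0)\min\widetilde{\mathcal{U}}$. Composing with Step 1 gives
\[
\mathcal{U}(\hat m,\hat c)\le (1+\epsilon_0)\bigl(\min\mathcal{U}+\widetilde{\epsilon}\bigr)+\widetilde{\epsilon}\le (1+\epsilon_0)(1+\epsilon')D^*,
\]
so choosing $\epsilon_0,\epsilon'=\Theta(\epsilon)$ gives factor $1+\epsilon$. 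The main obstacle is that $\widetilde{\epsilon}$ is tied to the unknown $D^*$, and the polynomial degree $k$, hence the constant $s$, must not depend on the input. To handle this, I first compute a constant-factor estimate of $D^*$ in linear time, for example the $3$-approximation from the later section or the diameter of the vertex set. I then normalize the input by scaling and translating so that the bounding box has unit size. After normalization the functions range over a fixed compact family parametrized by vertex coordinates in $[0,1]^2$. Uniform continuity of this family in both the line parameters and the vertex coordinates gives a degree $k=k(\epsilon)$ independent of $n$ and $r$. The case $D^*=0$, or $D^*$ tiny relative to the box, is checked directly: it means all vertices lie on one line with compatible orderings, which can be tested in linear time.

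\textbf{Step 3: recover the segment.} On the line $\hat L=L_{\hat m,\hat c}$, set $\hat D=\mathcal{U}(\hat m,\hat c)$, evaluated exactly in $O(N)$ time. Then run steps 4–7 of \cref{alg:1-2-center} with $\hat D$ in place of $D$, computing the $\hat D$-optimal segments of all start- and end-backpairs and intersecting them in $O(N)$ time. The proofs of \cref{lm_nonemptyD}, \cref{lm_left1}, \cref{lm_right1} and \cref{sufficient} never use minimality of $D$; they use only that every distance function on $L$ is at most $D$. They therefore apply verbatim with $\hat D$, giving a segment $\overline{ab}$ with $\max_i\delta_F(\pi_i,\overline{ab})\le\hat D\le(1+\epsilon)D^*$. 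Together with \cref{necessary}, this proves the claimed factor. The total time is $O(n^2r+nr^2+1/\epsilon^{s})$.
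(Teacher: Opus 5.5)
Your route is the paper's: replace every distance function by a polynomial with additive error $\widetilde{\epsilon}\le\epsilon' D^*/2$, approximate the envelope with \cite{Coreset}, and compose the errors. Your composed bound $(1+\epsilon_0)(1+\epsilon')D^*$ is correct. Step~3 also makes explicit something the paper only implies. The proofs of \cref{lm_nonemptyD}, \cref{lm_left1}, \cref{lm_right1} and \cref{sufficient} use only that every distance function on the chosen line is at most the threshold, never minimality. So running steps 4--7 of \cref{alg:1-2-center} with $\hat D=\mathcal{U}(\hat m,\hat c)$ does recover a segment of value at most $\hat D$.

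The gap is in the step you yourself flag as the main obstacle. Taking the value $D_3\in[D^*,3D^*]$ of the $3$-approximation (evaluated via \cref{thm_frd}) does remove the circularity in choosing $\widetilde{\epsilon}$. The diameter of the vertex set does not, since it is not within a constant factor of $D^*$. But scaling the bounding box to unit size does not make the degree input-independent. In normalized units the required additive accuracy is of order $\epsilon' D^*/\mathrm{diam}$, and $D^*/\mathrm{diam}$ has no lower bound in terms of $\epsilon$. Uniform continuity over the compact family therefore yields a degree $k$ that depends on $\mathrm{diam}/D^*$. Your fallback is false. Curves that zigzag with amplitude $\eta$ while advancing monotonically along a common long segment have $0<D^*\le\eta$, with $\eta/\mathrm{diam}$ arbitrarily small and no collinearity. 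This regime still needs a genuine $(1+\epsilon)$-approximation; only $D^*=0$ can be settled by an exact test. Normalizing instead so that $D_3=1$ loses compactness, since the vertex coordinates then range over a region of size $\mathrm{diam}/D_3$. So neither $k$ nor $s$ has been shown to be independent of the input.

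The paper does not resolve this either: it simply takes $\widetilde{\epsilon}\le\epsilon' D^*/2$ with $D^*$ unknown and never bounds the degree. On this point your argument is no weaker than the paper's, but the justification you add is incorrect, and the running time with a constant $s$ remains unproved. You also inherit two further issues from the paper:
\begin{itemize}
\item Even if $k$ depended only on $\epsilon$, the bounds of \cite{Coreset} treat the linearization dimension (here $\Theta(k^2)$) as a constant. So $s$ or the hidden constants would become functions of $\epsilon$.
\item \cite{Coreset} is a statement about families of (roots of) polynomials admitting a linearization, not about the partially defined, piecewise functions you obtain. Calling them ``semi-algebraic of constant description complexity'' does not by itself license the invocation.
\end{itemize}
Relatedly, $B^*$ blows up as $L$ becomes parallel to the bisector of $p_ip_j$. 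It is therefore continuous only on the subdomain where that piece is active, not on the whole box as Step~1 asserts.
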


\subsection{A \texorpdfstring{$3$}{3}-approximation in time linear in the number of curves}
\label{sec_approx_linear}
Let $a$ be the center of the smallest enclosing ball of the start
vertices $\pi_1(0),\dots,\pi_r(0)$, and let $b$ be the center of the smallest
enclosing ball of the end vertices $\pi_1(n),\dots,\pi_r(n)$. We consider the segment $\overline{ab}$ as $(1,2)$-center.

\begin{algorithm}[H]
\DontPrintSemicolon
\caption{\textsc{Candidate-$(1,2)$-Center}}
\label{alg:candidate}
\KwIn{A set $\pi = \{\pi_1, \pi_2, \ldots, \pi_r\}$ of polygonal curves in $\mathbb{R}^d$.}
\KwOut{A line segment $\overline{ab}$ realizing a $3$-factor approximate $(1,2)$-center.}
\BlankLine
$a \gets$ the center of the smallest enclosing ball of all start vertices\;
$b \gets$ the center of the smallest enclosing ball of all end vertices\; \label{line:envelope}
\Return the line segment $\overline{ab}$\;
\end{algorithm}

\begin{theorem}
\label{thm_3approx}
$\overline{ab}$ produced by \cref{alg:candidate} is a $3$-factor approximation of the optimal $(1,2)$-center. For curves in the plane, it is computed in $O(r)$ time- linear in the number of curves and independent of the curve complexity.
\end{theorem}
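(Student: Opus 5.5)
The plan is to compare the output $\overline{ab}$ with an optimal center $\overline{a^*b^*}$ through the triangle inequality for the Fr\'echet distance. Let $D^*=\max_k \delta_F(\pi_k,\overline{a^*b^*})$ be the optimal value. Since $\delta_F$ is a metric on curves, for every $k$
\[
\delta_F(\pi_k,\overline{ab})\le \delta_F(\pi_k,\overline{a^*b^*})+\delta_F(\overline{a^*b^*},\overline{ab})\le D^*+\delta_F(\overline{a^*b^*},\overline{ab}).
\]
It therefore suffices to show $\delta_F(\overline{a^*b^*},\overline{ab})\le 2D^*$.

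First I bound the Fr\'echet distance between two segments. Parametrize both segments linearly by $t\in[0,1]$. The distance between corresponding points is
\[
\|(1-t)(a^*-a)+t(b^*-b)\|\le \max(\|a^*-a\|,\|b^*-b\|).
\]
This uses only the triangle inequality and homogeneity, so it holds for both $\mathbb{L}_2$ and $\mathbb{L}_\infty$. Hence $\delta_F(\overline{a^*b^*},\overline{ab})\le\max(\|a-a^*\|,\|b-b^*\|)$.

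Next I bound $\|a-a^*\|$; the bound for $\|b-b^*\|$ is symmetric.
\begin{itemize}
\item Any Fr\'echet matching must pair start points with start points. So $\|\pi_k(0)-a^*\|\le D^*$ for all $k$; this is also the first term in \cref{thm_frd}.
\item Thus the ball of radius $D^*$ centered at $a^*$ encloses all start vertices. The smallest enclosing ball therefore has radius $R_s\le D^*$, in either norm.
\item Fix any start vertex, say $\pi_1(0)$. Then $\|a-a^*\|\le\|a-\pi_1(0)\|+\|\pi_1(0)-a^*\|\le R_s+D^*\le 2D^*$.
\end{itemize}
The same argument with end vertices gives $\|b-b^*\|\le 2D^*$. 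Combining the steps yields $\delta_F(\pi_k,\overline{ab})\le 3D^*$ for every curve, so $\overline{ab}$ is a $3$-approximation.

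For the running time, the algorithm only needs the $r$ start vertices and the $r$ end vertices. In the plane under $\mathbb{L}_2$, the smallest enclosing circle is computable in $O(r)$ time by Megiddo's prune-and-search, the same technique used in \cref{sec_env}. Under $\mathbb{L}_\infty$, the smallest enclosing ball is an axis-parallel square; its center can be taken as the center of the bounding box, found in $O(r)$ time. The running time is independent of $n$ because no other vertices are examined.

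The main point to check carefully is the segment-to-segment bound under $\mathbb{L}_\infty$, together with the claim that the enclosing-ball radius is at most $D^*$. Both depend only on norm axioms, so I expect no real obstacle.
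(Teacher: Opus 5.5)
Your proof is correct, but it routes the triangle inequality through a different intermediate curve than the paper does. The paper compares each $\pi_i$ with its own endpoint segment $\overline{\pi_i(0)\pi_i(n)}$. It imports the simplification lemma of Agarwal et al., $\delta_F(\pi_i,\overline{\pi_i(0)\pi_i(n)})\le 2d_i\le 2\Delta$, where $d_i$ is the best single-segment distance of $\pi_i$. It then applies the same segment-to-segment bound with $\|\pi_i(0)-a\|\le r_1\le\Delta$ and $\|\pi_i(n)-b\|\le r_2\le\Delta$. You pass instead through the optimal center $\overline{a^*b^*}$, so the first term costs only $D^*$. All the loss then sits in the endpoint displacement $\max(\|a-a^*\|,\|b-b^*\|)\le 2D^*$. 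Your argument is self-contained: the cited lemma is itself proved by this kind of triangle inequality, which you have in effect inlined once for all curves. You also handle the $\mathbb{L}_\infty$ running time explicitly, which the paper leaves implicit. In exchange, the paper's version gives the per-curve bound $\delta_F(\pi_i,\overline{ab})\le 2d_i+\max(r_1,r_2)$, stated in terms of input quantities rather than the unknown optimum.

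Your decomposition also has slack that the paper's lacks. Under $\mathbb{L}_2$, the minimum-enclosing-ball center $a$ lies in the convex hull of the start vertices. That hull is contained in the ball of radius $D^*$ about $a^*$, so $\|a-a^*\|\le D^*$. The same holds for $b$, and for your bounding-box center under $\mathbb{L}_\infty$, since the $\mathbb{L}_\infty$-ball about $a^*$ is an axis-parallel box containing the bounding box. With this one extra observation, your argument shows the algorithm is in fact a $2$-approximation, and this is tight. For $r=1$ the algorithm returns $\overline{\pi_1(0)\pi_1(n)}$. For the curve with vertices $(0,0),(2,0),(1,0)$, this segment is at distance $1$, while the segment from $(0,0)$ to $(1.5,0)$ is at distance $1/2$. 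The paper's route cannot be sharpened this way, because its intermediate curve can already be at distance $2\Delta$ from $\pi_i$.
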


\begin{proof}
The distance of the optimal $(1,2)$-center to the set of curves is $\Delta$.
Let $r_1,r_2$ be the radius of the smallest enclosing ball of the start and end vertices, respectively. If $r_1>\Delta$ then there does not exist any point $p$ for which $\max_i|\pi_i(0)-p|\le \Delta$ contradiction as start point $a'$ of optimal $(1,2)$-center satisfies $|\pi_i(0)-a'|\le \Delta$. So, $r_1\le \Delta$ and similarly $r_2\leq \Delta$; we say $r=\max(r_1,r_2)\le \Delta$. So, $|\pi_i(0)-a|,|\pi_i(n)-b|\le \Delta$ for all $i$. Let $d_i$ be the smallest possible \frd from $\pi_i$ to a line
segment; so $d_i\le\Delta$. In ~\cite{simplification_Agarwal} it has been shown that for any curve, $\delta_F\!\bigl(\pi_i,\overline{\pi_i(0)\pi_i(n)}\bigr)\le 2d_i$, and that two directed segments satisfy
$\delta_F(\overline{uv},\overline{xy})\le\max(|u-x|,|v-y|)$. So,
\[
\delta_F\!\bigl(\overline{\pi_i(0)\pi_i(n)},\overline{ab}\bigr)\le
\max\bigl(\|\pi_i(0)-a\|,\|\pi_i(n)-b\|\bigr)\le r\le\Delta .
\]

By the triangle inequality for the \frd,
\[
\delta_F(\pi_i,\overline{ab})\le
\delta_F\!\bigl(\pi_i,\overline{\pi_i(0)\pi_i(n)}\bigr)
+\delta_F\!\bigl(\overline{\pi_i(0)\pi_i(n)},\overline{ab}\bigr)\le
2d_i+\Delta\le3\Delta .
\]

The smallest enclosing circle of $r$ points in the plane is computed in $O(r)$
time~\cite{Megiddo}. This validates the statement of \cref{thm_3approx}.
\end{proof}

\section{Discussion}
We have shown that an optimal $(1,2)$-center for polygonal curves can be computed in polynomial time. However, the techniques developed here do not appear to extend directly to the computation of a $(1,3)$-center or, more generally, to the case where $k=1$ and $\ell$ takes larger values. An interesting open question is whether, while keeping the computational complexity manageable, the number of centers $k$ can be increased. A positive answer to this question would lead to effective clustering methods for polygonal curves.

\paragraph*{Acknowledgment :} The authors are grateful to Utsav Choudhury, Matthew (Matya) Katz, and Anil Maheshwari for stimulating discussions that greatly helped in writing this paper.

\bibliography{new}
\end{document}